\DeclareMathOperator{\diag}{diag}
\newcommand{\TT}{\mathbb{T}}
\newcommand{\RR}{\mathbb{R}}
\newcommand{\ZZ}{\mathbb{Z}}
\newcommand{\BB}{\mathcal{B}}
\newtheorem{theorem}{Theorem}
\newtheorem{problem}[theorem]{Problem}
\newtheorem{remark}[theorem]{Remark}
\newtheorem{lemma}[theorem]{Lemma}
\newtheorem{proposition}[theorem]{Proposition}
\newtheorem{definition}[theorem]{Definition}
\begin{document}

\title[Orbit determination for hyperbolic maps]{Asymptotic behaviour of orbit determination for hyperbolic maps}
\thanks{This work was partially supported by the National Group of Mathematical Physics (GNFM-INdAM) through the project ``Orbit Determination: from order to chaos'' (Progetto Giovani 2019). This research is also part the authors' activity within the UMI-DinAmicI community (www.dinamici.org) and the GNFM-INdAM, Italy.}

%\titlerunning{Orbit determination for hyperbolic maps}        % if too long for running head

\author{Stefano Mar\`o}
\address{Dipartimento di Matematica, Universit\`a di Pisa, 
  Largo B. Pontecorvo 5, 56127 Pisa, Italy}
\email{stefano.maro@unipi.it}

\author{Claudio Bonanno}
\address{Dipartimento di Matematica, Universit\`a di Pisa,
Largo B. Pontecorvo 5, 56127 Pisa, Italy}
\email{claudio.bonanno@unipi.it}      

\maketitle

\begin{abstract}
   We deal with the orbit determination problem for
   hyperbolic maps.  The problem consists in determining the initial
   conditions of an orbit and, eventually, other parameters of the model from some observations.
   We study the behaviour of the confidence region in the case of 
   simultaneous increase of the number of observations and the time
   span over which they are performed. More precisely, we describe the
   geometry of the confidence region for the solution, distinguishing
   whether a parameter is added to the estimate of the initial
   conditions or not. We prove that the inclusion of a dynamical
   parameter causes a change in the rate of decay of the
   uncertainties, as suggested by some known numerical evidences.

%\keywords{Orbit determination \and confidence region \and Lyapunov exponents \and hyperbolic maps}
% \PACS{PACS code1 \and PACS code2 \and more}
% \subclass{ \and  \and 93E24 }
\end{abstract}

\section{Introduction}
\label{intro}
This paper is concerned with the behaviour of the confidence region
coming from an orbit determination process as the number of
observation increases.

We recall that orbit determination consists of recovering information
on some parameters (initial conditions or dynamical parameters) of a
model given some observations and goes back to Gauss
\cite{gauss}. The solution, called nominal solution, relies on the
least squares algorithm and the confidence region summarises the
uncertainties coming from the intrinsic errors in the observational
process.

The problem under investigation is suggested by the numerical results
in \cite{ssm,smil} where some estimates are given for the cases of a
map depending on a parameter and presenting both ordered and chaotic
zones: the Chirikov standard map \cite{chir} (see also \cite{sm,cell} for its importance in Celestial Mechanics).  The authors of \cite{ssm,smil} constructed
the observations by adding some noise to a true orbit of the
map. Then, they set up an orbit determination process to recover the
true orbit and observed the decay of the uncertainties as the number of observations grows.  The experiments show that the result crucially depends
on the dynamics and on whether the parameter is included in the orbit
determination process or not. More precisely, if the observations come
from an ordered zone (an invariant curve), then the uncertainties
decrease polynomially both if the parameter is included or not. This
behaviour was analytically proven to be true at least in the case
where only the initial conditions are estimated in \cite{maro1}, using
KAM techniques.

The numerical results coming from the chaotic case are more
delicate. From the practical point of view, the problem of the
so-called computability horizon occurs. This prevents orbit
determination from being performed if the time span of the
observations is too large and more sophisticated techniques must be
employed, such as the multi-arc approach \cite{ssm}.  Moreover, at
least until the computability horizon, the uncertainties on the sole
initial conditions decrease exponentially, while a polynomial decay is
observed when the parameter is included in the orbit determination
process.

In this paper we give an analytical proof of this last result. We will consider a
class of hyperbolic maps depending on a
parameter. The existence of chaotic orbits, in the future or in the past, for these systems is given by the fact that all the
Lyapunov exponents are supposed to be non zero. Despite the above
described practical problem in computing a solution, we will always
suppose that the least squares algorithm converges and gives a nominal
solution. Hence, the estimates on the decay of the uncertainty are
given asymptotically as the number of observations goes to infinity. To
state the result we recall that the confidence region is an ellipsoid
in the space of the fit parameters and the uncertainties strictly
depend on the size of the axes of such an ellipsoid.

We will prove that, in the case of estimating only the initial
conditions, there exists a full measure set of possible nominal
solutions for which all the axes of the related confidence ellipsoid
decay exponentially.
On the other hand, if the parameter is included in the orbit
determination process then there exists a full measure set of initial
conditions and parameters for which the related confidence ellipsoid
has an axis that decays strictly slower than exponentially.

Our analytical results are consistent with the numerical
results in \cite{ssm,smil}. In the case of estimating the parameter we
cannot prove polynomial decay of the uncertainties,
however we will show that this occurs for a class of affine maps depending on a parameter. Perturbed automorphisms of the torus are representative of this class, including the famous Arnold's Cat Map.

We conclude stressing the fact that chaotic orbit determination is a
challenge for both space mission and impact monitoring. Actually, the
accurate determination of orbits of chaotic NEOs is essential in the
impact monitoring activity \cite{mv}. Another interesting case is
given by satellites. When their operating life finishes, they are left
without control in safe orbits, governed only by the natural
forces. It has been noticed that many parts of this region are chaotic
\cite{ros}, due to the perturbed motion of the Moon. It is then
important to track and determine orbits of non-operating satellites
that could crash into operating ones. Finally, the targets
of many space missions include the determination of some unknown
parameter. Typical examples are the ESA/JAXA BepiColombo mission to
Mercury, the NASA JUNO and ESA JUICE missions to Jupiter that are
performed in a chaotic environment \cite{lm}.

Moreover, this results are related to a conjecture posed by Wisdom in
1987 (see \cite{wis}). Discussing the chaotic rotation state of
Hyperion, it was proposed that ``the knowledge gained from
measurements on a chaotic dynamical system grows exponentially with
the time span covered by the observations''. In particular, this was related to
the information on dynamical parameters like the moments of inertia
ratios.

The paper is organised as follows. In Section \ref{statement} we adapt
the general description of the problem given in \cite{maro1} to our
situation and state our main results. Moreover we briefly discuss our results as compared with the numerical simulations in \cite{ssm,smil}. Section \ref{sec:proofA} is
dedicated to the proof of the result concerning the estimation of the
sole initial conditions, while in Section \ref{sec:proofB} is
dedicated to the proof of the results in the case the parameter is included. Section \ref{sec:example} is dedicated to the study of a concrete example, and our conclusions are given in Section \ref{sec:conclusions}.

\section{Statement of the problem and main results}\label{statement}

\subsection{Notation and preliminaries on Lyapunov exponents}\label{notation}
The main tool of our approach to the orbit determination problem is the notion of Lyapunov exponents of a differentiable map, of which we now recall the definition and the main properties as stated in the Oseledets Theorem. Let $f:X\to X$ be a diffeomorphism of an $d$-dimensional differentiable manifold $X$ endowed with a $\sigma$-algebra $\BB$ and an $f$-invariant probability measure $\mu$. We recall that a measure is $f$-invariant if $\mu(E) =\mu(f^{-1}(E))$ for all $E\in \BB$. One can work in the local charts of $X$, so we use the notation $f(x) = ((f)_1(x),\dots,(f)_d(x))$ for the components of $f$, and denote the Jacobian matrix of $f$ by  
\begin{equation}\label{Fk}
  F(x) =\left(
  \begin{array}{ccc}
    \frac{\partial (f)_1}{\partial x_1}(x) & \dots & \frac{\partial (f)_1}{\partial x_d}(x) \\
    \vdots &  & \vdots \\
    \frac{\partial (f)_d}{\partial x_1}(x) & \dots & \frac{\partial (f)_d}{\partial x_d}(x)
  \end{array}
  \right)
\end{equation}
and, for $n\in\ZZ$, we denote by $F^n(x)$ the Jacobian matrix of $f^n$. By the chain rule it can be written as
\begin{equation}\label{Fnk}
  F^n(x) =\left\{
  \begin{split}
   &  F(f^{n-1}(x))F(f^{n-2}(x))\cdots F(x) \quad &\mbox{for }n\geq 1, \\
    & \mathbbm{1} \quad &\mbox{for }n = 0, \\ 
    & F^{-1}(f^{n}(x))F^{-1}(f^{n+1}(x))\cdots F^{-1}(f^{-1}(x)) \quad &\mbox{for }n<0. 
    \end{split}
  \right.
  \end{equation}
Let us now introduce the Lyapunov exponents of $f$. Suppose that
\begin{equation}\label{L1}
\int_X \log \|F^{\pm 1} \| d\mu <\infty.
 \end{equation}
Given $x\in X$ and a vector $v\in T_xM$, let us define 
\begin{equation}\label{limiti-lyap}
\begin{aligned}
&  \overline{\gamma}_\pm (x,v) := \limsup_{n\to\pm\infty}\frac{1}{|n|}\log \left|F^n(x)v\right|, \\
&  \underline{\gamma}_\pm (x,v) := \liminf_{n\to\pm\infty}\frac{1}{|n|}\log \left|F^n(x)v\right|.
\end{aligned}
\end{equation}
If $\overline{\gamma}_\pm (x,v) =\underline{\gamma}_\pm (x,v)$ we use the notation 
\[
\gamma_\pm(x,v) := \overline{\gamma}_\pm (x,v) =\underline{\gamma}_\pm (x,v).
\]
In principle, the limits $\gamma_\pm(x,v)$ depends on $x$ and $v$. The following version of the classical theorem by Oseledets \cite{ose} (see also \cite{rag,rue}) gives an answer to this problem. The interested reader can find more details on Lyapunov exponents in \cite{bar_pes}.

\begin{theorem}[Oseledets]\label{osel}
Let $f:X\to X$ be a measure-preserving diffeomorphism of an $d$-dimensional differentiable manifold $X$  endowed with a $\sigma$-algebra $\BB$ and an $f$-invariant probability measure $\mu$, and assume that the Jacobian matrix $F(x)$ satisfies \eqref{L1}. Then for $\mu$-almost every $x\in X$ there exist numbers
  \[
\gamma_1(x)<\gamma_2(x)<\dots<\gamma_{r(x)}(x)
  \]
  with $r(x)\leq d$ and a decomposition
  \[
T_xX = E_1(x)\oplus E_2(x)\oplus \dots\oplus E_{r(x)}(x)
\]
such that
\begin{itemize}
\item[(i)] for every $v\in E_i(x)$, $\limsup$ and $\liminf$ in \eqref{limiti-lyap} coincide and $\gamma_+(x,v) = -\gamma_-(x,v) = \gamma_i(x)$;
%\item for $j\neq i$ it holds that
%  \[
%\lim_{n\to\pm\infty}\frac{1}{n}\log\left|\sin \angle(F_k^nE_j(x),F_k^nE_i(x)) \right|;
%  \]
\item[(ii)] the functions $x\mapsto r(x),\gamma_i(x),E_i(x)$ are measurable and $f$-invariant;
\item[(iii)] if $f$ is ergodic, then $r(x),\gamma_i(x),E_i(x)$ are $\mu$-a.e. constant;
\item[(iv)] for $\mu$-a.e. $x\in X$, the matrix
  \[
  \Lambda(x) := \lim_{n\to\infty}\left[(F^n(x))^TF^n(x)\right]^{1/2n}
  \]
exists and $\exp \gamma_1(x),\dots,\exp \gamma_{r(x)}(x)$ are its eigenvalues. 
  \end{itemize}
\end{theorem}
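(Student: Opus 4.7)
The plan is to follow the classical Raghunathan--Ruelle strategy based on Kingman's subadditive ergodic theorem. The heart of the argument is the existence of the limit matrix $\Lambda(x)$ in (iv); items (i)--(iii) will then follow from spectral analysis of $\Lambda(x)$ combined with the cocycle identity \eqref{Fnk} and a symmetric application to $f^{-1}$.

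First I would establish the existence of the Lyapunov exponents at the level of singular values. Set $\varphi_n(x) := \log \|F^n(x)\|$; the chain rule \eqref{Fnk} gives the subadditive cocycle inequality $\varphi_{n+m}(x) \leq \varphi_n(f^m(x)) + \varphi_m(x)$. The integrability assumption \eqref{L1} places us in the hypotheses of Kingman's subadditive ergodic theorem, which yields $\mu$-a.e.\ convergence of $\frac{1}{n}\varphi_n(x)$ to an $f$-invariant function, giving the top exponent. To recover all exponents, I would run the same argument with the exterior power cocycles $\bigwedge^k F^n(x)$, whose operator norms are products of the top $k$ singular values of $F^n(x)$. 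Taking differences of the limits for consecutive $k$ produces candidate numbers $\gamma_1(x) < \cdots < \gamma_{r(x)}(x)$ which are, by construction, measurable and $f$-invariant.

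Next I would prove existence of the limit matrix $\Lambda(x) := \lim_n [(F^n(x))^T F^n(x)]^{1/(2n)}$. Writing each $F^n(x)$ in its singular value decomposition $U_n \Sigma_n V_n^T$, the previous step controls $\Sigma_n^{1/n}$. The subtle point is the convergence of the right singular frame $V_n$. Raghunathan's argument exploits the spectral gaps $\gamma_{i+1}(x) - \gamma_i(x) > 0$ together with the subadditive bounds to show that the $i$-th eigenspace of $[(F^n(x))^T F^n(x)]^{1/(2n)}$ is Cauchy in the Grassmannian; its limit $V_i(x)$ is the $e^{\gamma_i(x)}$-eigenspace of $\Lambda(x)$. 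This simultaneously produces $\Lambda(x)$ and a forward-invariant filtration $W_1(x) \subset W_2(x) \subset \cdots \subset T_xX$ with $W_i(x) := V_1(x) \oplus \cdots \oplus V_i(x)$, such that for $v \in W_i(x) \setminus W_{i-1}(x)$ the forward limit $\gamma_+(x,v)$ exists and equals $\gamma_i(x)$.

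Finally I would upgrade the filtration to the direct sum decomposition in (i). Applying the entire construction to $f^{-1}$, using \eqref{L1} for $F^{-1}$ as well, produces a backward filtration $\widetilde W_r(x) \subset \widetilde W_{r-1}(x) \subset \cdots$ with the signs of the exponents reversed. Defining $E_i(x)$ as the dimensionally appropriate intersection $W_i(x) \cap \widetilde W_i(x)$ yields the splitting $T_xX = E_1(x) \oplus \cdots \oplus E_{r(x)}(x)$, and forces any nonzero $v \in E_i(x)$ to satisfy $\gamma_+(x,v) = -\gamma_-(x,v) = \gamma_i(x)$, proving (i). Measurability in (ii) is inherited from the $\limsup/\liminf$'s and from the Grassmannian limits, and $f$-invariance is read off by comparing $\Lambda(f(x))$ with $F(x)\Lambda(x)F(x)^{-1}$ via \eqref{Fnk}. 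The ergodic case (iii) is then immediate, since the measurable $f$-invariant functions produced above must be $\mu$-a.e.\ constant.

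The hard part is the convergence of the eigenframes of $[(F^n(x))^T F^n(x)]^{1/(2n)}$, not merely of the eigenvalues: Kingman alone gives no control on the singular \emph{vectors}, and without that one is stuck with a filtration rather than a splitting, and cannot pin $\gamma_+(x,v)$ to a single exponent for generic $v$. Overcoming this requires the delicate Raghunathan--Ruelle angle estimates that use the spectral gaps $\gamma_{i+1}(x)-\gamma_i(x)$ in a quantitative way; this is the technical core of the theorem.
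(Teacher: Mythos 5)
The paper does not prove this statement at all: Theorem~\ref{osel} is quoted as a classical black-box result, with the proof delegated to the cited literature (Oseledets, Raghunathan, Ruelle, Barreira--Pesin). So there is no ``paper's proof'' to match; what you have written is, in outline, exactly the Raghunathan--Ruelle argument from those references. Your roadmap is sound: Kingman's subadditive ergodic theorem applied to $\log\norm{F^n}$ and to the exterior-power cocycles $\bigwedge^k F^n(x)$ correctly produces the measurable, $f$-invariant exponents (and \eqref{L1} is the right integrability hypothesis for both the forward and backward cocycles); the convergence of the eigenspaces of $[(F^n(x))^TF^n(x)]^{1/2n}$ is indeed the technical core and does yield $\Lambda(x)$ together with the forward filtration; and intersecting the forward and backward filtrations is the standard way to pass from a filtration to the Oseledets splitting and to get the two-sided identity $\gamma_+(x,v)=-\gamma_-(x,v)=\gamma_i(x)$ of item (i).

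That said, as a proof the proposal is an outline rather than an argument: the two steps you correctly identify as hard are only named, not carried out. First, the Cauchy property of the eigenframes in the Grassmannian requires the quantitative angle estimates exploiting the gaps $\gamma_{i+1}(x)-\gamma_i(x)$; without them you only control singular values, and item (iv) and the exactness of $\gamma_+(x,v)$ for $v\in W_i\setminus W_{i-1}$ both fail to follow. Second, the phrase ``dimensionally appropriate intersection'' hides a genuine lemma: one must prove that $\mu$-a.e.\ the forward and backward filtrations are in general position, i.e.\ that the subspaces $E_i(x)$ so defined have the multiplicities forced by the exterior-power limits and actually span $T_xX$; this transversality is not automatic and is usually where the a.e.\ quantifier is consumed. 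A smaller point: the invariance in (ii) is cleanest at the level of the exponents and of the filtrations (via $F^{n}(f(x))F(x)=F^{n+1}(x)$ from \eqref{Fnk}); the matrix $\Lambda(f(x))$ is not literally conjugate to $\Lambda(x)$ by $F(x)$, only its spectrum and the splitting are transported. None of these observations points to a wrong approach --- the skeleton is the correct one --- but to turn the proposal into a proof you would essentially have to reproduce Raghunathan's estimates, which is precisely why the paper cites them instead.
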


\begin{definition} \label{def:lyap-exp}
The numbers $\gamma_1(x),\dots,\gamma_{r(x)}(x)$ given in Theorem \ref{osel} are the \emph{Lyapunov exponents of $f$ at $x$}, and for each $\gamma_i(x)$, $i=1,\dots,r(x)$, the dimension of the corresponding vector space $E_i(x)$ is called the \emph{multiplicity} of the exponent.
\end{definition}

Without loss of generality one can assume that the diffeomorphism $f$ is ergodic, so that the Lyapunov exponents and their multiplicities do not depend on $x$. The case of non-ergodic maps can be treated by the standard procedure of ergodic decomposition, obtaining similar results depending on the ergodic component to which the initial condition $x$ belongs.

\begin{definition}\label{def:hyp-map}
A diffeomorphism $f:X\to X$ is called \emph{hyperbolic} if it has no vanishing Lyapunov exponents.
\end{definition}

In the particular case of Hamiltonian maps, or of maps preserving the volume form of a manifold, one can easily deduce that hyperbolic maps necessarily have positive Lyapunov exponents, so are \emph{chaotic}. Moreover, by Theorem \ref{osel}-(i), it follows that for a hyperbolic map either $f$ or $f^{-1}$ is chaotic. This is an important remark for our main results.

In the following we consider diffeomorphisms depending on a parameter $k\in K\subset \RR$ and use the notation $f_k:X\to X$. We also assume that the dependence on $k$ is differentiable and that the probability measure $\mu$ of the manifold $X$ is $f_k$-invariant and the map is ergodic for all $k\in K$. The Jacobian matrix of $f_k^n$ with respect to $x$ for $n\in\ZZ$ is denoted by $F^n_k(x)$ and can be written as in \eqref{Fk} and \eqref{Fnk}. Assuming that \eqref{L1} is satisfied, we can apply Oseledets Theorem to $f_k$ for all $k\in K$ and find its Lyapunov exponents at $\mu$-a.e. $x\in X$.

Since the map $f_k$ is differentiable also with respect to the parameter $k$, we can also consider the Jacobian matrix of $f_k$ with respect to $(x,k)$ which is denoted by
\begin{equation}\label{Fktilde}
\tilde{F}_k(x) =\left(
  \begin{array}{ccc|c}
    \frac{\partial (f_k)_1}{\partial x_1}(x) & \dots & \frac{\partial (f_k)_1}{\partial x_d}(x) & \frac{\partial (f_k)_1}{\partial k}(x) \\
    \vdots &  & \vdots & \vdots \\
    \frac{\partial (f_k)_d}{\partial x_1}(x) & \dots & \frac{\partial (f_k)_d}{\partial x_d}(x) & \frac{\partial (f_k)_d}{\partial k}(x)
  \end{array}
  \right)
\end{equation}
Analogously, for $n\in\ZZ$ the Jacobian matrix of $f_k^n$ with respect to $(x,k)$ is denoted by $\tilde{F}^n_k(x)$ and can be written as
\begin{equation}\label{Fnktilde}
  \tilde{F}^n_k(x) =
  \left(
  \begin{array}{ccc|c}
     &  &  & \frac{\partial (f^n_k)_1}{\partial k}(x) \\
    \multicolumn{3}{c|}{F^n_k(x)}  & \vdots \\
     &  & & \frac{\partial (f^n_k)_d}{\partial k}(x)
  \end{array}
  \right)
\end{equation}

\subsection{Statement of the problem}
A general statement of the problem can be found in \cite{maro1}. For the sake of completeness, here we recall and adapt it to the present notations.

Consider a map $f_k$ as in the previous section. Given an initial condition $x$, its orbit is completely determined by the iterations $f_k^n(x)$ for $n\in \ZZ$. Instead let's suppose that we have been observing the evolution of the state of a system modelled by $f_k$ and that we have got the observations $(\bar{X}_n)$ for $|n|\leq N$.
Following \cite{mg2010} we set up an orbit determination process to determine the unknown parameters. We consider two different scenarios.
\begin{itemize}
\item[(A)] Only the initial conditions $x$ are unknown.
\item[(B)] Both the initial conditions $x$ and the parameter $k$ are unknown.
\end{itemize}
In both cases, we search for the values of the parameters that best approximate, in the least squares sense, the given observations. We first define the \emph{residuals} as
\begin{equation}\label{def_res}
\begin{aligned}
&\xi_{n,k}(x):=\bar{X}_n-f^n_k(x) &\text{for case (A)}, \\
&\tilde{\xi}_n(x,k):=\bar{X}_n-f^n_k(x) &\text{for case (B)}.
\end{aligned}
\end{equation}
We stress that, even if the expressions coincide, in case (A) the residuals are defined in terms of a fixed $k$, whereas in case (B) the value of $k$ is to be determined. 

\noindent Subsequently, we call the \emph{least squares solution} $x_0$ in case (A), or $(x_0,k_0)$ in case (B), the (local) minimiser of the \emph{target function}
\begin{equation}\label{def_target}
\begin{aligned}
& Q_k(x) := \frac{1}{2N+1}\sum_{|n|\leq N}\xi_{n,k}(x)^T\xi_{n,k}(x) &\text{for case (A)}, \\
& \tilde{Q}(x,k) := \frac{1}{2N+1}\sum_{|n|\leq N}\tilde{\xi}_n(x,k)^T\tilde{\xi}_n(x,k) &\text{for case (B)}.
\end{aligned}
\end{equation}
We will not be concerned with the existence and computation of the
minima. This is a very delicate task, solved via iterative schemes
such as the Gauss-Newton algorithm and the differential
corrections. These algorithms crucially depend on the choice of the
initial conditions. See \cite{gbm}, \cite{mbbg} for some recent
results on this topic for the asteroid and space debris cases.
For the case of chaotic maps that we study in this paper, the problem is considered in \cite{ssm,smil} where computational problems that occur for large $N$ are treated with advanced techniques. In the following of this paper, we assume that the least squares solution $x_0$, or $(x_0,k_0)$, exists and we refer to it as the \emph{nominal solutions}.

In general the observations $(\bar{X}_n)$ contain errors, hence values of $x$, or of $(x,k)$, that make the target function slightly bigger than the minimum $Q_k(x_0)$, or $\tilde{Q}(x_0,k_0)$, are acceptable. This leads to the definition of the \emph{confidence region} as
\[
\begin{aligned}
&\mathcal{Z}_k =\left\{ x\in X \: : \: Q_k(x) \leq Q_k(x_0) +\frac{\sigma^2}{2N+1}   \right\} &\text{for case (A)}.\\
&\tilde{\mathcal{Z}} =\left\{ (x,k)\in X\times \RR \: : \: \tilde{Q}(x,k) \leq \tilde{Q}(x_0,k_0) +\frac{\sigma^2}{2N+1} \right\} &\text{for case (B)},
\end{aligned}
\]
where $\sigma>0$ is an empirical parameter chosen depending on statistical properties and bounds the acceptable errors; the value of $\sigma$ is irrelevant for our purposes, hence in the next sections we will set $\sigma=1$. Expanding the target functions $Q_k(x)$ and $\tilde{Q}(x,k)$ at the corresponding nominal solution up to second order we get, using the notation introduced in \eqref{Fnk} and \eqref{Fnktilde} and working in local charts on $X$ so that we use the notation $x-x_0$ for a vector in $\RR^d$,
\begin{multline*}
Q_k(x) \sim Q_k(x_0)\,  + \\
\frac{1}{2N+1}
\left(\begin{array}{l}
  x-x_0
\end{array}
\right)^T
\sum_{|n|\leq N}\left[ F_k^n(x_0)^TF_k^n(x_0) + d(F_k^n(x_0))\xi_{n,k}^T(x_0)\right] \left(\begin{array}{l}
  x-x_0
\end{array}
\right)
\end{multline*}
and
\begin{multline*}
\tilde{Q}(x,k) \sim \tilde{Q}(x_0,k_0)\,  + \\
\frac{1}{2N+1}
\left(\begin{array}{l}
  x-x_0\\
  k-k_0
\end{array}
\right)^T
\sum_{|n|\leq N}\left[\tilde{F}_{k_0}^n(x_0)^T\tilde{F}_{k_0}^n(x_0) + d(\tilde{F}_{k_0}^n(x_0))\tilde{\xi}_{n}^T(x_0,k_0)\right] \left(\begin{array}{l}
  x-x_0\\
  k-k_0  
\end{array}
\right).
\end{multline*}

Under the hypothesis that the residuals corresponding to the nominal solution are small, we can neglect the terms $\xi_{n,k}^T(x_0)$ and $\tilde{\xi}_{n}^T(x_0,k_0)$. Then, we define the \emph{normal matrices} as
\begin{equation}\label{normmat}
\begin{aligned}
& C_{N,k}(x) := \sum_{|n|\leq N}F_k^n(x)^TF_k^n(x) &\text{for case (A)},\\
& \tilde{C}_{N}(x,k) := \sum_{|n|\leq N}\tilde{F}_{k}^n(x)^T\tilde{F}_{k}^n(x) &\text{for case (B)},
\end{aligned}
\end{equation}
and the associated \emph{covariance matrices} as
\[
\begin{aligned}
&  \Gamma_{N,k}(x):= \left[C_{N,k}(x)\right]^{-1} &\text{for case (A)}, \\
&  \tilde{\Gamma}_{N}(x,k) := \left[\tilde{C}_{N}(x,k)\right]^{-1} &\text{for case (B)}.
\end{aligned}
\]
Note that the matrices $C_{N,k}(x)$ and $\Gamma_{N,k}(x)$ defined for case (A) are $d\times d$, while the matrices $\tilde{C}_{N}(x,k)$ and $\tilde{\Gamma}_{N}(x,k)$ defined for case (B) are $(d+1)\times (d+1)$.  Moreover, the normal matrices are symmetric and positive definite since $f_k$ is a diffeomorphism and the operators $F_k(x)$ and $\tilde{F}_k(x)$ have maximum rank.  
Hence, the confidence regions can be approximated by the \emph{confidence ellipsoids} given by
\begin{equation} \label{conf-ellip-A}
\mathcal{E}_{N,k}(x_0)
:=\left\{ x\in\RR^d \: : \:
\left(\begin{array}{l}
  x-x_0
\end{array}
\right)^T
C_{N,k}(x_0)
\left(
\begin{array}{l}
  x-x_0
\end{array}
\right)
\leq \sigma^2  \right\}
\end{equation}
for case (A), and by 
\begin{equation} \label{conf-ellip-B}
\tilde{\mathcal{E}}_{N}(x_0,k_0):=\left\{ (x,k)\in\RR^{d+1} \: : \:
\left(\begin{array}{l}
  x-x_0\\
  k-k_0
\end{array}
\right)^T
\tilde{C}_{N}(x_0,k_0)
\left(
\begin{array}{l}
  x-x_0 \\
  k-k_0  
\end{array}
\right)
\leq \sigma^2  \right\}
\end{equation}
for case (B). The covariance matrices $\Gamma_{N,k}(x)$ and $\tilde{\Gamma}_{N}(x,k)$ describe the corresponding confidence ellipsoids $\mathcal{E}_{N,k}$ and $\tilde{\mathcal{E}}_{N}$ since the axes of the ellipsoids are proportional to the square root of the eigenvalues of the corresponding matrix and are directed along the corresponding eigenvectors. Since the matrix $\Gamma_{N,k}(x)$ is positive definite, its eigenvalues are all real and positive and we denote them by
\[
0<\lambda_{N,k}^{(1)}(x)\leq\dots\leq\lambda_{N,k}^{(d)}(x).
\]
Analogously, we denote by
\[
0<\tilde{\lambda}_{N}^{(1)}(x,k)\leq\dots\leq\tilde{\lambda}_{N}^{(d+1)}(x,k)
\]
the eigenvalues of $\tilde{\Gamma}_{N}(x,k)$.

The regions $\mathcal{E}$ represent the uncertainty of the nominal solution: the values inside $\mathcal{E}$ are acceptable and the projections of $\mathcal{E}$ on the axes, represent the (marginal) uncertainties of the coordinates. See Figure \ref{figure:ellipse}.

We remark that the normal and covariance matrices also have a probabilistic interpretation, see \cite{mg2010}. 

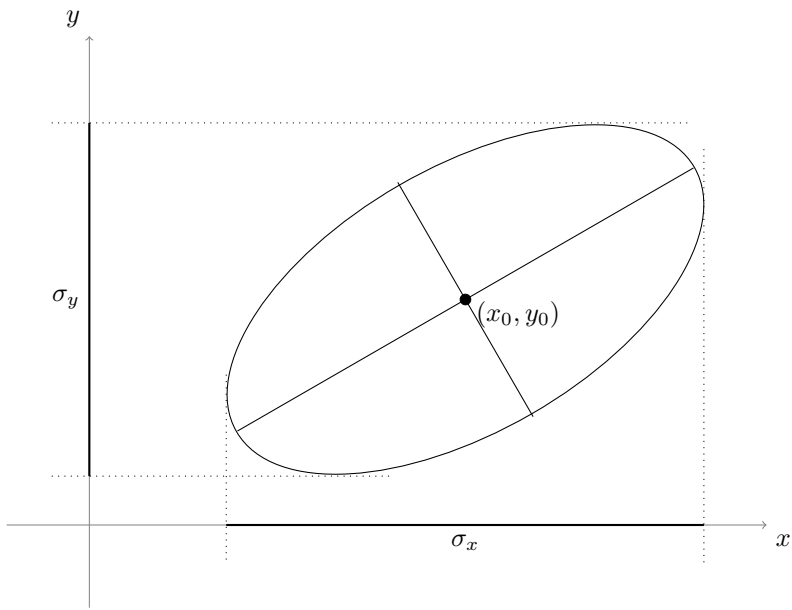
\begin{figure}[h!]
   \centering
  \begin{tikzpicture}
    \draw [help lines, ->]  (-1.1,0) -- (9,0);
    \draw [help lines, ->] (0,-1.1) -- (0,6.5);
    \draw [rotate around={30:(5,3)}] (5,3) ellipse (100pt and 50pt);
    \draw [rotate around={30:(5,3)}] (1.5,3) -- (8.5,3);
    \draw [rotate around={30:(5,3)}] (5,1.2) -- (5,4.8);
    \draw [dotted] (1.82,2) -- (1.82,-0.5);
    \draw [dotted] (8.17,5) -- (8.17,-0.5);
    \draw [dotted] (-0.5,0.65) -- (4,0.65);
    \draw [dotted] (-0.5,5.35) -- (8,5.35);
    \draw [thick] (1.82,0) -- (8.17,0);
    \draw [thick] (0,0.65) -- (0,5.35);
    \node [below right] at (9,0) {$x$};
    \node [left] at (0,3) {$\sigma_y$};
    \node [below] at (5,0) {$\sigma_x$};
    \node  at (5.7,2.8) {$(x_0,y_0)$};
    \filldraw (5,3) circle[radius=2pt];
    \node [above left] at  (0,6.5) {$y$};
  \end{tikzpicture}
  \caption{The confidence ellipse for the nominal value $(x_0,y_0)$. The values $\sigma_x,\sigma_y$ represent the marginal uncertainties of $x_0,y_0$ respectively, and depend on the value of $\sigma$.}
   \label{figure:ellipse}
\end{figure}

From the point of view of the applications, (e.g. impact monitoring \cite{mv}), it is of fundamental importance to know the shape and the size of the confidence ellipsoid $\mathcal{E}$. Hence, the question that we here address, stated in a broad sense, is the following:

\begin{problem} \label{p1}
Given a map $f_k$ as in Section \ref{notation} and a nominal solution of the associated orbit determination process, describe the confidence ellipsoids for large $N$ in cases (A) and (B).
\end{problem}

\begin{remark}
The solution of the problem passes through the computation of the eigenvalues of the covariance matrices for large $N$. Note that they crucially depend on the dynamics, since we have to compute the linearisation of the system along an orbit.
\end{remark}

\subsection{Main results} \label{sec-results}
In this paper we consider Problem \ref{p1} for hyperbolic maps. We now state and comment the results, giving the proofs in Sections \ref{sec:proofA} and \ref{sec:proofB}.

For all $k\in K\subset \RR$, let $f_k :X\to X$ be an ergodic hyperbolic diffeomorphism of a $d$-dimensional manifold $X$ with $f_k$-invariant probability measure $\mu$, and assume that $f_k$ satisfies \eqref{L1}.

For case (A) we have the following result
\begin{theorem}\label{mainA} 
Let $\gamma_1,\dots,\gamma_r$ be the Lyapunov exponents of $f_k$, and let 
\[
0< \gamma_*:= \min_i\, |\gamma_i| \le \gamma^*:= \max_i |\gamma_i|\, .
\] 
For $\mu$-almost every $x\in X$ the eigenvalues $\lambda_{N,k}^{(i)}(x)$ of  $\Gamma_{N,k}(x)$ satisfy 
\begin{equation}\label{thesisA}
-2\gamma^* \le \liminf_{N\to +\infty}\frac{\log \lambda_{N,k}^{(i)}(x)}{N} \le \limsup_{N\to +\infty}\frac{\log \lambda_{N,k}^{(i)}(x)}{N} \le -2 \gamma_*
\end{equation}
for every  $i=1,\dots,d$.  
\end{theorem}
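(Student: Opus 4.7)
The plan is to work with the eigenvalues of the normal matrix $C_{N,k}(x)$ rather than those of its inverse $\Gamma_{N,k}(x)$. Denoting the eigenvalues of $C_{N,k}(x)$ in increasing order by $\kappa^{(1)}_{N,k}(x)\le\dots\le\kappa^{(d)}_{N,k}(x)$, one has $\lambda^{(i)}_{N,k}(x)=1/\kappa^{(d-i+1)}_{N,k}(x)$, so \eqref{thesisA} is equivalent to the statement that, for every $j=1,\dots,d$,
\[
2\gamma_* \le \liminf_{N\to+\infty}\frac{\log \kappa^{(j)}_{N,k}(x)}{N} \le \limsup_{N\to+\infty}\frac{\log \kappa^{(j)}_{N,k}(x)}{N} \le 2\gamma^*.
\]
Since $\kappa^{(1)}\le\kappa^{(j)}\le\kappa^{(d)}$, the intermediate eigenvalues are sandwiched, and it is enough to bound $\kappa^{(d)}_{N,k}(x)$ from above by $e^{2(\gamma^*+o(1))N}$ and $\kappa^{(1)}_{N,k}(x)$ from below by $e^{2(\gamma_*-o(1))N}$.

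The upper bound is the easy direction, via the operator-norm estimate
\[
\kappa^{(d)}_{N,k}(x) = \|C_{N,k}(x)\| \le \sum_{|n|\le N}\|F_k^n(x)\|^2.
\]
For $\mu$-a.e.\ $x$, applying Oseledets' theorem to both $f_k$ and $f_k^{-1}$ gives $\limsup_{n\to\pm\infty}\tfrac{1}{|n|}\log\|F_k^n(x)\|\le \gamma^*$, because the largest Lyapunov exponent of $f_k$ in the forward (resp.\ backward) direction is at most $\gamma^*=\max_i|\gamma_i|$ by definition. Hence for every $\varepsilon>0$ and $|n|$ large, $\|F_k^n(x)\|^2\le e^{2(\gamma^*+\varepsilon)|n|}$, and summing over $|n|\le N$ yields $\kappa^{(d)}_{N,k}(x)\le C(x,\varepsilon)\, e^{2(\gamma^*+\varepsilon)N}$; letting $\varepsilon\to 0$ completes the upper bound.

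The lower bound is the substantive part. Fix a Lyapunov-regular $x$ and a unit $v\in T_xX$, and decompose $v=\sum_{i=1}^{r}v_i$ through the Oseledets splitting with $v_i\in E_i(x)$. The triangle inequality $1=|v|\le\sum|v_i|$ forces $|v_{i_0}|\ge 1/r$ for some $i_0$ (depending on $v$). Choose $n\in\{+N,-N\}$ so that the absolute exponent $|\gamma_{i_0}|$ drives growth in that time direction ($n=+N$ if $\gamma_{i_0}>0$, $n=-N$ otherwise). Oseledets' theorem, in its tempered refinement on Lyapunov-regular points, gives $|F_k^n(x) v_{i_0}|\ge c(x,\varepsilon)\,e^{(|\gamma_{i_0}|-\varepsilon)|n|}|v_{i_0}|$ for all $v_{i_0}\in E_{i_0}(x)$ and $|n|$ large. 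The complement $w=v-v_{i_0}\in\bigoplus_{j\neq i_0}E_j(x)$ is mapped by $F_k^n(x)$ into $\bigoplus_{j\neq i_0}E_j(f_k^n x)$, while $F_k^n(x)v_{i_0}\in E_{i_0}(f_k^n x)$, so the elementary inequality $|a+b|\ge|a|\sin\theta$ yields
\[
|F_k^n(x)v|\ge |F_k^n(x)v_{i_0}|\,\sin\theta_n^{(i_0)},
\]
with $\theta_n^{(i_0)}$ the angle at $f_k^n x$ between $E_{i_0}(f_k^n x)$ and $\bigoplus_{j\neq i_0}E_j(f_k^n x)$. The subexponential (tempered) decay of such angles at Lyapunov-regular points ensures $\sin\theta_n^{(i_0)}\ge e^{-\varepsilon|n|}$ for $|n|$ large, uniformly over the finitely many indices $i_0$. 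Combining,
\[
v^T C_{N,k}(x)v \ge |F_k^n(x)v|^2 \ge c(x,\varepsilon)^2\, r^{-2}\, e^{2(\gamma_*-2\varepsilon)N},
\]
and taking the infimum over unit $v$ followed by $\varepsilon\to 0$ delivers the desired lower bound on $\kappa^{(1)}_{N,k}(x)$.

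The main obstacle is concentrated in the lower bound: one must control, uniformly in the direction $v$, both the growth rate of $F_k^n(x)v_i$ inside each Oseledets subspace and the angle between Oseledets subspaces at iterated points. Both are standard outgrowths of Oseledets' theorem on the full-measure set of Lyapunov-regular points (the tempered behaviour of the cocycle and of the splitting), but they are the real analytic content that turns the pointwise Lyapunov rates into the uniform quadratic-form bound required here.
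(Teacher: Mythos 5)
Your proof is correct and follows the same overall strategy as the paper's: both pass to the eigenvalues of the normal matrix $C_{N,k}(x)$, use the variational (Rayleigh quotient) characterisation, and feed in the Oseledets growth rates of $|F_k^n(x)v|$ in forward and backward time, so that the reciprocal relation between the spectra of $C_{N,k}$ and $\Gamma_{N,k}$ yields \eqref{thesisA}. The genuine difference is in how the extremal Rayleigh quotients are controlled. The paper estimates $v^TC_{N,k}(x)v$ only for unit vectors $v$ lying in a single Oseledets subspace $E_i(x)$ and then reads off bounds on $\min_{|v|=1}v^TC_{N,k}(x)v$ and $\max_{|v|=1}v^TC_{N,k}(x)v$ directly, implicitly treating the extrema as if they were attained on the $E_i(x)$. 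You instead bound the largest eigenvalue by the operator-norm sum $\sum_{|n|\le N}\|F_k^n(x)\|^2$, and for the smallest eigenvalue you decompose an arbitrary unit vector through the Oseledets splitting, isolate a component $v_{i_0}$ with $|v_{i_0}|\ge 1/r$, and control the interference of the remaining components via the subexponential decay of the angles between the Oseledets subspaces along the orbit. This extra work is precisely what is needed to pass rigorously from test vectors inside the $E_i$ to arbitrary unit vectors, so your argument is, if anything, more complete than the published one; its cost is that it invokes the tempered estimates on each $E_i$ and the angle estimates at Lyapunov-regular points, which are standard (see Barreira--Pesin) but go beyond the bare statement of the Oseledets theorem quoted in Section \ref{notation}. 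Both routes deliver the same final bounds, and your sandwiching of the intermediate eigenvalues between $\kappa^{(1)}$ and $\kappa^{(d)}$ matches the paper's use of monotonicity of the ordered spectrum.
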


Theorem \ref{mainA} shows that the axes of the confidence ellipsoid defined in \eqref{conf-ellip-A} shrink exponentially fast with the number of observation. In fact the lengths of the axes of $\mathcal{E}_{N,k}(x)$ are the square roots of the eigenvalues of the corresponding covariance matrix $\Gamma_{N,k}(x)$. Hence the exponential rate of decay of the uncertainties is controlled by the Lyapunov exponents of the orbit corresponding to the nominal solution. 
%This result is consistent with \cite{ssm,smil}, in which the exponential decay of the uncertainties in case (A) was conjectured on the basis of numerical computations. 

We now show how the result changes in case (B). We prove the following

\begin{theorem}\label{mainB}
For $\mu$-almost every $x\in X$ the largest eigenvalue $\tilde{\lambda}_{N}^{(d+1)}(x,k)$ of $\tilde{\Gamma}_{N}(x,k)$ is a positive number which decreases with $N$ and satisfies
\[
\lim_{N\to +\infty} \frac{\log \tilde{\lambda}_{N}^{(d+1)}(x,k)}{N} = 0.
\]
\end{theorem}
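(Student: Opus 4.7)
The plan is to work with the smallest eigenvalue of the normal matrix, since $\tilde{\lambda}_N^{(d+1)}(x,k) = 1/\mu_N(x,k)$ with $\mu_N := \lambda_{\min}\bigl(\tilde{C}_N(x,k)\bigr)$. Monotonicity of $\tilde{\lambda}_N^{(d+1)}$ is immediate: as $N$ grows, $\tilde{C}_N$ gains positive semidefinite contributions $\tilde{F}_k^{\pm(N+1)}(x)^T\tilde{F}_k^{\pm(N+1)}(x)$, so by the Courant--Fischer min-max principle $\mu_N$ is non-decreasing and hence $\tilde{\lambda}_N^{(d+1)}$ is non-increasing; positivity holds as soon as $\tilde{C}_N$ is invertible, which is the standing assumption. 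Combined with the uniform lower bound $\mu_N \geq \mu_{N_0}>0$ for $N\geq N_0$, this yields $\limsup_N \log\tilde{\lambda}_N^{(d+1)}/N \leq 0$ for free.

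The crux of the proof is the matching inequality $\liminf_N \log\tilde{\lambda}_N^{(d+1)}/N \geq 0$, equivalent to showing that $\mu_N$ grows at most subexponentially in $N$. By Courant--Fischer, it suffices to exhibit a single unit vector $u=(w,c)\in \RR^{d+1}$ (depending on $x,k$, with $c\neq 0$) such that $\sum_{|n|\leq N}|\tilde{F}_k^n(x)u|^2 = O(N)$. The decisive structural fact is that, because the parameter $k$ is now a fit variable, $\tilde{F}_k^n(x)$ is a $d\times(d+1)$ matrix, so its kernel is one-dimensional and spanned by $\bigl(-(F_k^n(x))^{-1}v_n,\,1\bigr)$, where $v_n := \partial_k f_k^n(x)$. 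Although these kernels vary with $n$, in the hyperbolic regime they admit a common \emph{asymptotic} direction that will serve as the required $u$.

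To construct this direction I would use the Oseledets splitting $T_xX = E^s(x)\oplus E^u(x)$ guaranteed by Theorem \ref{osel} and Definition \ref{def:hyp-map}. From the cohomological recursion $v_{n+1} = F_k(f_k^n(x))v_n + g(f_k^n(x))$, with $g(y):=\partial_k f_k(y)$, one obtains the closed form $v_n = \sum_{j=0}^{n-1} F_k^{n-1-j}(f_k^{j+1}(x))\,g(f_k^j(x))$ for $n>0$ and an analogous expression for $n<0$. Pulling back by the inverse Jacobian and projecting on $E^u(x)$, the resulting series converges absolutely as $n\to+\infty$, because $F_k^{-j}$ restricted to the unstable bundle contracts at the (positive) unstable Lyapunov rate while $|g|$ grows at most subexponentially along $\mu$-a.e.\ orbit; call the limit $w_+\in E^u(x)$. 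Define $w_-\in E^s(x)$ by the symmetric backward-time construction, set $w := w_++w_-$ and $u := (w,1)/\sqrt{|w|^2+1}$. With this choice, the hyperbolic cancellation makes $|F_k^n(x)w+v_n|$ uniformly bounded in $n\in \ZZ$: at each end one component of $w$ matches $-v_n$ up to a geometrically small tail, while the complementary component decays exponentially. Hence $\sum_{|n|\leq N}|\tilde{F}_k^n(x)u|^2 = O(N)$, whence $\mu_N = O(N)$ and $\log\tilde{\lambda}_N^{(d+1)}/N\to 0$.

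The principal obstacle is to realise the construction of $w_\pm$ on a full-measure set. Oseledets provides the exponential rates only asymptotically along individual orbits, so a quantitative version (via Egorov sets or the subadditive ergodic theorem) is needed to make the defining series summable $\mu$-a.e., together with integrability of $\log|g|$ (or plain boundedness of $g$) so that the subexponential factors along the orbit can be disposed of by a standard Borel--Cantelli argument. Once these measurable-a.e.\ estimates are in place, the rest of the argument is the linear-algebra reduction outlined above.
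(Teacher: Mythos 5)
Your overall skeleton coincides with the paper's: pass to the smallest eigenvalue $\tilde{\delta}_N^{(1)}$ of the normal matrix, obtain monotonicity and the $\limsup\le 0$ half from Courant--Fischer, and prove the $\liminf\ge 0$ half by exhibiting one fixed unit test vector $u=(w,1)/\sqrt{|w|^2+1}$ along which the quadratic form grows subexponentially in $N$. Where you genuinely diverge is in how that vector is produced. The paper introduces the skew product $g(x,k)=(f_k(x),k)$, notes that $G^n$ is $\tilde{F}^n_k$ with an appended row $(0,\dots,0,1)$, and proves (Lemma \ref{lem2}) by a soft determinant argument --- $\det((G^n)^TG^n)=\det((F^n_k)^TF^n_k)$ forces the sum of the Lyapunov exponents of $g$ to equal that of $f_k$ --- that $g$ has exactly one additional Lyapunov exponent, equal to $0$; the Oseledets vector $v_0$ of that exponent is then the test vector, with $|G^n v_0|^2\le e^{2\varepsilon |n|}$ eventually. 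Your construction of $w=w_++w_-$ by summing the cohomological series forward along $E^u$ and backward along $E^s$ is an explicit realisation of that same neutral direction (in the affine example of Section \ref{sec:example} it reproduces exactly $v_0=(w,1)$ with $w=(\mathbbm{1}-A)^{-1}b$). What your route buys is constructiveness and, under \emph{uniform} hyperbolicity with bounded $\partial_k f_k$, the sharper bound $\tilde{\delta}_N^{(1)}=O(N)$, matching Proposition \ref{caseBcat}; what the paper's route buys is brevity and freedom from the a.e.\ convergence estimates you rightly identify as the delicate point. Two caveats. First, in the merely nonuniformly hyperbolic setting of the theorem, Oseledets provides only tempered (subexponential) constants along the orbit, so $|F^n_k(x)w+v_n|$ need not be uniformly bounded, only of subexponential growth; your sum is then $O(e^{\varepsilon N})$ for every $\varepsilon>0$ rather than $O(N)$. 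This still gives $\log\tilde{\delta}_N^{(1)}/N\to 0$, so the theorem survives, but the claims ``uniformly bounded'' and ``$O(N)$'' should be weakened accordingly. Second, both arguments require some integrability of $\log^+|\partial_k f_k|$ along the orbit (the paper tacitly, in order to apply Oseledets to $g$; you explicitly, to sum the series); making that hypothesis explicit is what closes the argument on a full-measure set.
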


Thus by Theorem \ref{mainB}, if the orbit determination problem includes the determination of the parameter $k$, the confidence ellipsoid defined in \eqref{conf-ellip-A} has one of the axes which shrinks slower than any exponential.
Since the uncertainties are the projection of the confidence ellipsoid on the direction of the parameters to be determined, in general the slow decay of this axis affects all the uncertainties, giving a lower bound to their speed of decay.
In Section \ref{sec:example} we consider an example for which we can prove a more precise asymptotic behaviour for $\tilde{\lambda}_{N}^{(d+1)}(x,k)$.

\begin{remark}
By the proof of Theorem \ref{mainB} we cannot exclude that $\tilde{\lambda}_{N}^{(d+1)}$ converges to a positive constant. However this would imply the failure of the orbit determination process, since the confidence ellipsoid wouldn't shrink to a point. 
\end{remark}

\begin{remark}
The methods used in Theorems \ref{mainA} and \ref{mainB} can be applied also to non-hyperbolic diffeomorphisms, showing a less than exponential decay of the uncertainties also in case (A). This problem was studied in \cite{maro1} for nominal solutions living on invariant curves of exact symplectic twist maps of the cylinder, for which a sharp estimate for the rate of decay of the uncertainties was proved.
\end{remark}

\subsection{Comparison with the numerical results in \cite{ssm,smil} }
Our results in Theorems \ref{mainA} and \ref{mainB} are consistent with the numerical estimates in \cite{ssm,smil}. The authors considered a classical model in Celestial Mechanics: the well known Chirikov Standard Map defined as $f_k:\mathbb{T}^2\rightarrow\mathbb{T}^2$, $f_k(x,y)=(\bar{x},\bar{y})$
\begin{equation*}
  \left\{
  \begin{array}{l}
    \bar{x} = x+\bar{y} \\
    \bar{y} = y- k\sin x.
  \end{array}
  \right.
  \end{equation*}

The data of an orbit determination process were produced adding a random Gaussian noise to the orbit with initial condition $(x_0,y_0)=(3,0)$ and $k=0.5$. This initial condition is close to the hyperbolic fixed point and is likely giving rise to a chaotic orbit. The differential corrections algorithm is then performed both in case (A) and (B).

Working in quadruple precision, numerical instability of the differential corrections occurs for the number of observations $N\sim 300$. For the same number of iterations it is computed the largest eigenvalue of the state transition matrix. A linear fit gives a Lyapunov indicator of $+0.086$. It represents the largest Lyapunov exponent of the solution to which the differential corrections converge, i.e. the largest Lyapunov exponent of the nominal value.

To get a comparison with our results in case (A), we apply Theorem \ref{mainA} to the Standard Map with initial condition given by the nominal value obtaining
\[
-\gamma_1=\gamma_2=\gamma_*=\gamma^*=0.086\, ,
\]
hence assuming that the largest Lyapunov exponent coincides with the Lyapunov indicator. Hence we expect the eigenvalues of the covariance matrix to shrink as
\begin{equation}\label{stim_theo}
\lambda^{(i)}_{N,0.5} \sim e^{-2\times 0.086 N}\, , \quad \text{for }\, i=1,2\, .
\end{equation}
In the numerical simulations for case (A) performed in \cite{ssm,smil}, it was computed the standard deviation of the components $x$ and $y$ at every iteration, corresponding to the values $\sigma_x$ and $\sigma_y$ in Figure \ref{figure:ellipse}. By a linear fit in logarithmic scale, the authors got the slopes $-0.084$ for $x$ and $-0.083$ for $y$, deducing numerically the following decay of the uncertainties as function of $N$
\begin{equation}\label{stim_num}
\sigma_{x}(N)\sim e^{-0.084 N}, \qquad \sigma_{y}(N)\sim e^{-0.083 N}.
\end{equation}
Note that since the uncertainties $\sigma_x,\sigma_y$ are proportional to the square root of the eigenvalues $\lambda^{(i)}_{N,0.5}$ of the covariance matrix, the estimate \eqref{stim_theo} coming from Theorem \ref{mainA} is in perfect accordance with the numerical results \eqref{stim_num} obtained in \cite{ssm,smil}. Thus Theorem \ref{mainA} represents a proof of the following conjecture posed in \cite{smil} for case (A): ``{\em exponentially improving determination of the initial conditions only is possible, and the exponent appears to be very close to the opposite of the Lyapunov exponent.}''

Concerning case (B), the numerical simulations give a decrease of the uncertainties of the form $N^\alpha$, with different values of $\alpha<-0.5$ for the parameters $x,y,k$. No quantitative conjecture is posed on the rate of decrease apart from being strictly less that exponential. This is consistent with our results in Theorem \ref{mainB}, and with the lower bound that we obtain for the decay of the uncertainties for the systems studied in Section \ref{sec:example}.

\section{Proof of Theorem \ref{mainA}}\label{sec:proofA}

Let us fix $k\in K$ and let $x\in X$ be a point for which Oseledets Theorem holds. Consider the normal matrix $C_{N,k}(x)$ as in \eqref{normmat} and, recalling that it is positive definite, denote by $0<\delta_{N,k}^{(1)}(x)\leq\dots\leq\delta_{N,k}^{(d)}(x)$ its eigenvalues. Using Oseledets Theorem, for every $i=1,\dots,r$ let $E_i$ be the vector space of the decomposition of $T_xX$, and write for a unit vector $v\in E_i$ 
\begin{equation}\label{boh1}
v^TC_{N,k}(x)v = \sum_{|n|\leq N}v^TF_k^n(x)^TF_k^n(x)v = \sum_{|n|\leq N}|F_k^n(x)v|^2.
\end{equation}
By Oseledets Theorem and \eqref{limiti-lyap}, for every $\varepsilon>0$ there exist $n_+=n_+(x)>0$ and $n_-=n_-(x)<0$ such that
  \begin{align*}
    &e^{(\gamma_i-\varepsilon)|n|}<|F_k^n(x)v|<e^{(\gamma_i+\varepsilon)|n|} \quad&  \mbox{for all }n>n_+, \\
    &e^{(-\gamma_i-\varepsilon)|n|}<|F_k^n(x)v|<e^{(-\gamma_i+\varepsilon)|n|} \quad&  \mbox{for all }n<n_-,
  \end{align*}
hence from \eqref{boh1} we can choose $\varepsilon\in (0,\gamma_*)$ and find $\bar{N}(x):=\max\{n_+(x),-n_-(x)\}$ such that for all $N>\bar{N}(x)$
\[
c_-(x) + \sum_{n=0}^Ne^{2(|\gamma_i|-\varepsilon)n}  < v^TC_{N,k}(x)v < c_+(x) + \sum_{n=0}^Ne^{2(|\gamma_i|+\varepsilon)n} + \sum_{n=0}^Ne^{2(-|\gamma_i|+\varepsilon)n}
\]
where
  \[
c_\pm(x) = \sum_{n=1}^{n_+(x)}\left(|F_k^n(x)v|^2 - e^{2(|\gamma_i|\pm \varepsilon)n} \right)  + \sum_{n=n_-(x)}^{-1}\left(|F_k^{n}(x)v|^2 - e^{2(-|\gamma_i|\pm \varepsilon)|n|} \right)
  \]
and in the left hand side we have neglected the terms with $e^{2(-|\gamma_i|-\varepsilon)|n|}$ for which the series converge. Hence, recalling the variational characterisation of the eigenvalues of a simmetric matrix, given $\varepsilon\in (0,\gamma_*)$ there exists $\bar{N}(x)$ such that for $N>\bar{N}(x)$
  \begin{align*}
    \delta_{N,k}^{(1)}(x) &= \min_{v\in T_x X, |v|=1}v^TC_{N,k}(x)v > c_-(x) + \sum_{n=0}^Ne^{2(\min_i |\gamma_i|-\varepsilon)n} \\
    & = c_-(x) + \sum_{n=0}^Ne^{2(\gamma_*-\varepsilon)n} = c_-(x) +\frac{e^{2(\gamma_*-\varepsilon)(N+1)}-1}{e^{2(\gamma_*-\varepsilon)}-1} \, .
  \end{align*}
Analogously for $N>\bar{N}(x)$, using $\tilde c_+(x):= c_+(x) + \max_i \sum_{n=0}^{+\infty}e^{2(-|\gamma_i|+\varepsilon)n}$,
 \begin{align*}
   \delta_{N,k}^{(d)}(x) &= \max_{v\in T_x X, |v|=1}v^TC_{N,k}(x)v < \tilde{c}_+(x) + \sum_{n=0}^Ne^{2(\max_i|\gamma_i|+\varepsilon)n} \\
   & = \tilde{c}_+(x) + \sum_{n=0}^N e^{2(\gamma^*+\varepsilon)n} = \tilde{c}_+(x) +\frac{e^{2(\gamma^*+\varepsilon)(N+1)}-1}{e^{2(\gamma^*+\varepsilon)}-1}
 \end{align*}
Finally, using that by definition  
\[
\lambda_{N,k}^{(1)}(x) = [\delta_{N,k}^{(d)}(x)]^{-1}, \quad \lambda_{N,k}^{(d)}(x) = [\delta_{N,k}^{(1)}(x)]^{-1} 
\] 
we have that for every $\varepsilon\in (0,\gamma_*)$ 
\[
-2(\gamma^*+\varepsilon) \le \liminf_{N\to +\infty}\frac{\log \lambda_{N,k}^{(i)}(x)}{N} \le \limsup_{N\to +\infty}\frac{\log \lambda_{N,k}^{(i)}(x)}{N} \le -2 (\gamma_*-\varepsilon)\, .
\]
Since $\varepsilon$ is arbitrary, the theorem follows. \qed

\section{Proof of Theorem \ref{mainB}} \label{sec:proofB}

Let us introduce the auxiliary diffeomorphism
\[
g:X\times K \rightarrow X\times K, \quad g(x,k)=(f_k(x),k).
\]
Recalling \eqref{Fktilde}, we can write the $(d+1)\times(d+1)$ Jacobian matrix $G(x,k)$ of $g$ with respect to $(x,k)$ as follows
\begin{equation}
G(x,k) = \left( \begin{array}{ccc|c} \frac{\partial (f_k)_1}{\partial x_1}(x) & \dots & \frac{\partial (f_k)_1}{\partial x_d}(x) & \frac{\partial (f_k)_1}{\partial k}(x) \\
    \vdots &  & \vdots & \vdots \\
    \frac{\partial (f_k)_d}{\partial x_1}(x) & \dots & \frac{\partial (f_k)_d}{\partial x_d}(x) & \frac{\partial (f_k)_d}{\partial k}(x)\\
\hline
0 &\dots  & 0 & 1
\end{array}
\right)
=
\left(
  \begin{array}{ccc|c}
     \multicolumn{4}{c}{\tilde{F}_k(x)} \\
    \hline
    0 &\dots  & 0 & 1 
   \end{array}
  \right)
  \end{equation}
and for $n\in\ZZ$ we denote by $G^n(x,k)$ the Jacobian matrix of $g^n$ with respect to $(x,k)$. As in \eqref{Fnk}, by the chain rule it can be written as
\begin{equation}
  G^n(x,k) =\left\{
  \begin{split}
   &  G(g^{n-1}(x,k))G(g^{n-2}(x,k))\dots G(x,k) \quad &\mbox{for }n\geq 1, \\
    & \mathbbm{1} \quad &\mbox{for }n = 0, \\ 
    & G^{-1}(g^{n}(x,k))G^{-1}(g^{n+1}(x,k))\dots G^{-1}(g^{-1}(x,k)) \quad &\mbox{for }n<0. 
    \end{split}
  \right. .
  \end{equation}
Finally we consider the auxiliary normal matrix
\[
C_N^g(x,k) = \sum_{|n|\leq N}G^n(x,k)^TG^n(x,k),
\]
which is related to the normal matrix $\tilde{C}_N(x,k)$ as shown in the following lemma.
\begin{lemma}\label{lem1}
  For all $n\in \ZZ$ it holds 
\begin{equation}\label{thlem1}
G^n(x,k) =\left(
  \begin{array}{ccc|c}
     \multicolumn{4}{c}{\tilde{F}^n_k(x)} \\
    \hline
    0 &\dots  & 0 & 1    
  \end{array}
  \right)
  =
  \left(
 \begin{array}{ccc|c}
     \multicolumn{3}{c|}{F^n_k(x)} & \frac{\partial f_k^n}{\partial k}(x) \\
    \hline
    0 &\dots  & 0 & 1    
  \end{array}
  \right)  
\end{equation}
so that for all $N\ge 1$
\begin{equation}\label{thlem2}
  C^g_N(x,k)=\tilde{C}_N(x,k) +
  \left(
  \begin{array}{ccc|c}
    & & & 0\\
     \multicolumn{3}{c|}{0} & \vdots \\
& & & 0\\
     \hline
    0 &\dots  & 0 & 2N+1   
  \end{array}
  \right)
  \end{equation}
\end{lemma}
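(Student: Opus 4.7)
\textbf{Proof plan for Lemma \ref{lem1}.}

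The plan is to first establish the block form \eqref{thlem1} for $G^n(x,k)$, and then to read off \eqref{thlem2} by a direct matrix multiplication.

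For \eqref{thlem1}, the key observation is that the auxiliary map $g$ leaves the parameter $k$ unchanged. Concretely, I would first show by a short induction on $n\ge 0$ that $g^n(x,k) = (f_k^n(x),k)$, using the definition $g(x,k) = (f_k(x),k)$, and then extend the identity to $n<0$ by applying the same argument to $g^{-1}(x,k)=(f_k^{-1}(x),k)$. Differentiating the identity $g^n(x,k)=(f_k^n(x),k)$ with respect to $(x,k)$, the first $d$ rows of $G^n(x,k)$ are exactly the Jacobian of $f_k^n$ with respect to $(x,k)$, i.e.\ the block $\tilde F_k^n(x)$ as defined in \eqref{Fnktilde}, while the last row comes from differentiating the trivial identity $k\mapsto k$, giving $(0,\dots,0,1)$. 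This yields \eqref{thlem1} in one stroke, without having to propagate the chain rule through \eqref{Fnk}-style products.

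For \eqref{thlem2}, I write $G^n(x,k)$ in the block form
\[
G^n(x,k) = \begin{pmatrix} \tilde F_k^n(x) \\ e_{d+1}^T \end{pmatrix},
\]
where $e_{d+1}\in\RR^{d+1}$ is the last standard basis vector. A direct multiplication gives
\[
G^n(x,k)^T G^n(x,k) \;=\; \tilde F_k^n(x)^T \tilde F_k^n(x) \;+\; e_{d+1} e_{d+1}^T,
\]
and $e_{d+1}e_{d+1}^T$ is precisely the $(d+1)\times(d+1)$ matrix with a single $1$ in the bottom-right entry and zeros elsewhere. Summing over $|n|\le N$ gives $2N+1$ copies of $e_{d+1}e_{d+1}^T$, producing exactly the correction matrix in \eqref{thlem2}, while the first summand reproduces $\tilde C_N(x,k)$ by definition \eqref{normmat}.

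There is no real obstacle; the only thing to be slightly careful about is the case $n<0$ in step one, where one must check that the identity $g^n(x,k)=(f_k^n(x),k)$ really does survive inversion, so that the last row of $G^n$ remains $(0,\dots,0,1)$ and not something more complicated. Once that is in place, \eqref{thlem2} is a one-line computation.
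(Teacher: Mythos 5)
Your proof is correct and follows essentially the same route as the paper's: both identities come down to the chain rule for the last column and a direct block computation. The only organizational difference is that you obtain \eqref{thlem1} by differentiating the closed-form identity $g^n(x,k)=(f_k^n(x),k)$ in one stroke, whereas the paper verifies the equivalent recursion $\frac{\partial f_k^n}{\partial k}(x)=\frac{\partial f_k}{\partial k}(f_k^{n-1}(x))+F_k(f_k^{n-1}(x))\frac{\partial f_k^{n-1}}{\partial k}(x)$ for the last column directly; your expansion $(G^n)^T G^n=(\tilde F_k^n)^T\tilde F_k^n+e_{d+1}e_{d+1}^T$ summed over $|n|\le N$ is exactly the computation the paper dismisses as a ``straightforward consequence'' of \eqref{thlem1}.
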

\begin{proof}
  Formula \eqref{thlem1} comes from the definitions noting that for $n>0$
  \[
\frac{\partial f_k^n}{\partial k}(x) = \frac{\partial}{\partial k}(f_k(f_k^{n-1}(x))=  \frac{\partial f_k}{\partial k}(f_k^{n-1}(x))+F_k(f_k^{n-1}(x))\frac{\partial f_k^{n-1}}{\partial k}(x)
\]
and a similar formula holds for $n<0$. Formula \eqref{thlem2} is a straightforward consequence of \eqref{thlem1}.
\end{proof}

We now study the Lyapunov exponents of $g$. First we consider the measure $\mu \times \delta_k$ on $X\times K$ which is clearly $g$-invariant. Moreover, if $f_k$ is ergodic the same holds for $g$. Thus, under assumption \eqref{L1} for $f_k$ we can apply Oseledets Theorem to $g$, and obtain that for $\mu$-almost every $x\in X$ the map $g$ admits Lyapunov exponents $\tilde{\gamma}_1,\dots,\tilde{\gamma}_{\tilde{r}}$ and an associated decomposition
\[
T_x(X\times K) = \tilde{E}_1\oplus \tilde{E}_2\oplus \dots\oplus \tilde{E}_{\tilde{r}}\, .
\]
In the following lemma we describe the relation between the Lyapunov exponents of $g$ and those of $f_k$.

\begin{lemma}\label{lem2}
Given $g$ as above, we have $\tilde{r}=r+1$, and
\begin{equation*}
  \left\{
  \begin{aligned}
    \tilde{\gamma}_i = \gamma_i, \quad &\dim \tilde{E}_i =\dim E_i  \quad \mbox{for } i=1,\dots, r \\
    \tilde{\gamma}_{\tilde{r}}= 0,  \quad &\dim \tilde{E}_{\tilde{r}} =1. 
    \end{aligned}
    \right.
  \end{equation*}

\end{lemma}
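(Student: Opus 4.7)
\textbf{Proof plan for Lemma \ref{lem2}.} The plan is to exploit the $G$-invariance of the subbundle $E^x := T_x X \times \{0\} \subset T_{(x,k)}(X \times K)$, on which $G$ acts exactly as $F_k$.

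First, I would observe, from the explicit block form \eqref{thlem1} of $G^n$, that $E^x$ is invariant under $G$ and that the cocycle $G$ restricted to $E^x$ coincides with the cocycle $F_k$ of $f_k$ over $(X,\mu)$. Applying Oseledets' theorem to $G|_{E^x}$ -- or equivalently to $f_k$ -- yields the Lyapunov exponents $\gamma_1,\dots,\gamma_r$ with Oseledets subspaces $E_i(x)\times\{0\} \subset E^x$ of dimensions $\dim E_i$. Since $E^x$ is $G$-invariant, each $\gamma_i$ is also a Lyapunov exponent of $g$ and the corresponding Oseledets subspace of $g$ has dimension at least $\dim E_i$.

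Second, since $\dim T_{(x,k)}(X\times K) = d+1 = \dim E^x + 1$, the global Oseledets decomposition of $g$ has exactly one extra dimension to account for. To pin down the missing exponent, I would use the determinant identity
\[
\sum_j \tilde{\gamma}_j \dim \tilde{E}_j \;=\; \int_{X\times K} \log|\det G|\,d(\mu\times\delta_k) \;=\; \int_X \log|\det F_k|\,d\mu \;=\; \sum_i \gamma_i \dim E_i,
\]
which follows from Oseledets (iv) combined with Birkhoff's ergodic theorem, noting that $\det G(x,k)=\det F_k(x)$ thanks to the last row $(0,\dots,0,1)$ and the bottom-right entry $1$. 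This forces the extra one-dimensional piece to contribute exactly $0$ to the weighted sum of exponents.

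Third, hyperbolicity of $f_k$ gives $0\notin\{\gamma_1,\dots,\gamma_r\}$, so the extra dimension cannot enlarge the multiplicity of any existing $\gamma_i$ (that would force $\gamma_i=0$). Therefore it must correspond to a \emph{new} Lyapunov exponent equal to $0$ with multiplicity $1$, producing $\tilde{r}=r+1$, $\tilde{\gamma}_{\tilde r}=0$, $\dim\tilde{E}_{\tilde r}=1$, and $\tilde{\gamma}_i=\gamma_i$, $\dim\tilde{E}_i=\dim E_i$ for $i=1,\dots,r$. The main obstacle is verifying cleanly that the Oseledets decomposition of $g$ restricted to the invariant subbundle $E^x$ genuinely reproduces that of $f_k$, with matching multiplicities. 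This is standard: the characterization of the Oseledets subspaces by the growth rates in \eqref{limiti-lyap} applies identically to $G|_{E^x}$ and to $F_k$, since their actions on $E^x\cong T_xX$ coincide. Equivalently one can invoke the general fact that the Lyapunov spectrum with multiplicities of a block-triangular cocycle is the union of the spectra of its diagonal blocks, which in our situation yields the spectrum $\{\gamma_1,\dots,\gamma_r\}\cup\{0\}$ directly.
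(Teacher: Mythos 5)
Your proposal is correct and follows essentially the same route as the paper: show that each $\gamma_i$ persists as an exponent of $g$ via the invariant subspace $T_xX\times\{0\}$ on which $G$ acts as $F_k$, then use $\det G=\det F_k$ to force the one extra dimension to carry exponent $0$, and hyperbolicity to conclude it is a new exponent of multiplicity one. The only cosmetic difference is that you express the determinant identity through $\int\log|\det G|\,d(\mu\times\delta_k)$ and Birkhoff, whereas the paper takes the limit of $\det(G^n{}^TG^n)=\det(F_k^n{}^TF_k^n)$ using Oseledets (iv); both are standard and equivalent, and your explicit use of hyperbolicity in the last step is if anything slightly cleaner than the paper's terse conclusion.
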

\begin{proof}
First of all, if $\gamma_i$ is a Lyapunov exponent of $f_k$ then it is also a Lyapunov exponent of $g$ with the same multiplicity, in the sense that $\tilde{\gamma}_i= \gamma_i$ and $\dim E_i =\dim \tilde{E}_i$ for all $i=1,\dots,r$. Actually, for every $v\in\ E_i$ and for $\mu$-almost every $x\in X$, using \eqref{thlem1} we have
  \begin{equation}
    \gamma_i=\lim_{n\to\pm\infty}\frac{1}{n}\log \left|F^n_k(x)v\right| = \lim_{n\to\pm\infty}\frac{1}{n}\log \left|G^n(x,k)(v,0)^T\right| = \tilde{\gamma}_i. 
    \end{equation}
Then we prove that the last exponent of $g$ is zero. To this end, we recall that by Oseledets Theorem the eigenvalues of the matrix
  \[
\Lambda_k(x)=\lim_{n\to\infty}(F^n_k(x)^TF^n(x))^{1/2n}
  \]
are $e^{\gamma_1}, e^{\gamma_2},\dots, e^{\gamma_{r}}$ with multiplicity $\dim E_i$ and the eigenvalues of the matrix
  \[
\tilde{\Lambda}(x,k)=\lim_{n\to\infty}(G^n(x,k)^TG^n(x,k))^{1/2n}
  \]
are $e^{\tilde{\gamma}_1}, e^{\tilde{\gamma}_2},\dots, e^{\tilde{\gamma}_{\tilde{r}}}$ with multiplicity $\dim \tilde{E}_i$. Now, by \eqref{thlem1}, for every $n\in\ZZ$
  \[
\det(G^n(x,k)^TG^n(x,k)) = (\det F_k^n(x) )^2 = \det( F_k^n(x)^T F_k^n(x))
\]
so that $\det\tilde{\Lambda}(x,k) = \det\Lambda_k(x)$ that is
\[
\sum_{i=1}^{\tilde{r}}\tilde{\gamma}_i \, \dim \tilde{E}_i = \sum_{i=1}^{r}\gamma_i\, \dim E_i. 
\]
But since the Lyapunov exponents of $f_k$ are also Lyapunov exponents of $g$ with the same multiplicity, we must have $\tilde{r}=r+1$, $\tilde{\gamma}_{\tilde{r}}=0$ and $\dim \tilde{E}_i=1$.
\end{proof}
The conclusion of the proof of Theorem \ref{mainB} is a consequence of the following lemma. To state it, let us consider the normal matrix
\[
\tilde{C}_{N}(x,k) = \sum_{|n|\leq N}\tilde{F}_{k}^n(x)^T\tilde{F}_{k}^n(x)
\]
and denote its eigenvalues by
\[
0<\tilde{\delta}_{N}^{(1)}(x,k)\leq\dots\leq\tilde{\delta}_{N}^{(d+1)}(x,k).
\]
\begin{lemma}\label{lem3}
For $\mu$-almost every $x\in X$ the smallest eigenvalue $\tilde{\delta}_{N}^{(1)}(x,k)$ of $\tilde{C}_{N}(x,k)$ is a positive number which increases with $N$ and satisfies
\[
\lim_{N\to+\infty}\frac{\log \tilde{\delta}_{N}^{(1)}(x,k)}{N} =0.
\]
\end{lemma}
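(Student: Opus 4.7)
The plan is to exploit the auxiliary diffeomorphism $g$ together with the identification of its zero Lyapunov exponent given in Lemma \ref{lem2}. Roughly, the zero exponent of $g$ produces a distinguished direction $\tilde{v}\in\tilde{E}_{\tilde{r}}\subset \RR^{d+1}$ along which the orbit of $g$ under $G^n$ grows sub-exponentially; Lemma \ref{lem1} then translates this sub-exponential growth for $C^g_N$ into a sub-exponential upper bound for a diagonal Rayleigh quotient of $\tilde{C}_N$, which forces the smallest eigenvalue to decay slower than exponentially.

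More precisely, I would first apply Oseledets to $g$ with the product measure $\mu\times\delta_k$. By Lemma \ref{lem2}, for $\mu$-almost every $x\in X$ there exists a unit vector $\tilde{v}\in\RR^{d+1}$ spanning $\tilde{E}_{\tilde{r}}$ such that $\tfrac{1}{|n|}\log|G^n(x,k)\tilde{v}|\to 0$. Hence for every $\varepsilon>0$ there is a constant $C_\varepsilon(x,k)>0$ with $|G^n(x,k)\tilde{v}|^2 \le C_\varepsilon(x,k)\,e^{2\varepsilon|n|}$ for every $n\in\ZZ$, which when summed over $|n|\le N$ yields $\tilde{v}^T C^g_N(x,k)\tilde{v}\le C'_\varepsilon(x,k)\,e^{2\varepsilon N}$. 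Now invoke formula \eqref{thlem2} of Lemma \ref{lem1}: the correction matrix is positive semi-definite, so
\[
\tilde{v}^T \tilde{C}_N(x,k)\tilde{v} \;=\; \tilde{v}^T C^g_N(x,k)\tilde{v} \;-\; (2N+1)\,\tilde{v}_{d+1}^{\,2} \;\le\; \tilde{v}^T C^g_N(x,k)\tilde{v}.
\]
The variational characterisation $\tilde{\delta}_N^{(1)}(x,k)=\min_{|v|=1} v^T\tilde{C}_N v$ then gives $\tilde{\delta}_N^{(1)}(x,k)\le C'_\varepsilon(x,k)\,e^{2\varepsilon N}$ for every $\varepsilon>0$, and hence $\limsup_{N\to\infty}\tfrac{1}{N}\log\tilde{\delta}_N^{(1)}(x,k)\le 0$.

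For the matching lower bound, I would observe that $\tilde{C}_{N+1}-\tilde{C}_N=\tilde{F}_k^{N+1,T}\tilde{F}_k^{N+1}+\tilde{F}_k^{-N-1,T}\tilde{F}_k^{-N-1}$ is positive semi-definite, so by Weyl's monotonicity $\tilde{\delta}_N^{(1)}(x,k)$ is non-decreasing in $N$. Combined with the positive definiteness of $\tilde{C}_N$ for $N$ sufficiently large (which holds because the one-dimensional kernels of the individual terms $\tilde{F}_k^{n,T}\tilde{F}_k^n$ are spanned by distinct vectors $(-(F_k^n)^{-1}\partial_k f_k^n,1)$ and therefore have trivial intersection), this gives $\tilde{\delta}_N^{(1)}(x,k)\ge c(x,k)>0$ for all $N$ large. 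Thus $\liminf_{N\to\infty}\tfrac{1}{N}\log\tilde{\delta}_N^{(1)}(x,k)\ge 0$, and the two one-sided limits together yield the stated convergence to $0$.

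The main conceptual point — and the only place where real dynamical information enters — is the identification of the zero Lyapunov exponent direction of $g$ and the cancellation in Lemma \ref{lem1} that peels off the $(2N+1)\,\tilde{v}_{d+1}^{\,2}$ contribution in precisely the right way. The monotonicity and positivity arguments are routine linear algebra, and the sub-exponential bound comes "for free" from Oseledets once the auxiliary map $g$ and Lemma \ref{lem2} have been set up.
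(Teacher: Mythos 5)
Your proposal is correct and follows essentially the same route as the paper: it uses the zero Lyapunov exponent direction $v_0$ of the auxiliary map $g$ to bound the Rayleigh quotient of $C^g_N$ sub-exponentially, transfers this to $\tilde{C}_N$ via Lemma \ref{lem1}, and obtains the matching lower bound from monotonicity and positivity of $\tilde{\delta}^{(1)}_N$. The only (harmless) differences are that you exploit the sign of the $(2N+1)$ correction directly instead of adding it to the bound, and you spell out the positive-definiteness of $\tilde{C}_N$, which the paper takes as a standing assumption.
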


\begin{proof}
First $\{\tilde{\delta}_{N}^{(1)}(x,k)\}_N$ is an increasing sequence of positive terms since 
\[
\tilde{\delta}_{N}^{(1)}(x,k) = \min_{v\in T_x (X\times K), |v|=1}v^T\tilde{C}_{N}(x,k)v 
\]
and $v^T\tilde{C}_{N}(x,k)v$ is the sum of $2N+1$ positive terms. 

Let us now denote by $v_0\in \RR^{d+1}$ the unit vector corresponding to the vanishing Lyapunov exponent of $g$, so that
\begin{equation} \label{conv-zero}
\lim_{n\to\pm \infty}\frac{1}{|n|}\log\left|G^n(x,k)v_0 \right| =0.
\end{equation}
Hence, for every $\varepsilon>0$ there exists $\bar{n}$ such that  $\left|G^n(x,k)v_0 \right|^2<e^{2\varepsilon |n|}$ for $|n|>\bar{n}$,  so that, for $N>\bar{n}$
\[
\begin{aligned}
  \min_{v\in T_x(X\times K), |v|=1}v^TC^g_{N}(x,k)v  & \leq v_0^TC^g_{N}(x)v_0 =\sum_{|n|\leq N}\left|G^n(x,k)v_0 \right|^2 \\
  & = \sum_{|n|\leq \bar{n}}\left|G^n(x,k)v_0 \right|^2 + \sum_{\bar{n}<|n|\le N}\left|G^n(x,k)v_0 \right|^2 \\
  & \leq c(x,k) +2\sum_{n=0}^N e^{2\varepsilon n} = c(x,k) + 2\frac{e^{2\varepsilon (N+1)}-1}{e^{2\varepsilon } -1},  
\end{aligned}
\]
for some constant $c(x,k)$ independent on $N$. 

We now use Lemma \ref{lem1} to find an estimate for the eigenvalues of $\tilde{C}_{N}(x,k)$. From the variational characterisation of the eigenvalues and \eqref{thlem2},
\[
\begin{aligned}
\tilde{\delta}_{N}^{(1)}(x,k) & = \min_{v\in T_x(X\times K), |v|=1}v^T\tilde{C}_{N}(x,k)v  \leq 
\left(\min_{v\in\RR^{d+1}, |v|=1}v^TC^g_{N}(x,k)v\right) +2N+1 \\
& \leq c(x,k) + 2\frac{e^{2\varepsilon (N+1)}-1}{e^{2\varepsilon } -1} +2N+1.
\end{aligned}
\]
Hence, for every $\varepsilon>0$ there exists a constant $c_1(x,k)$ not depending on $N$ such that
\[
0\le \lim_{N\to +\infty} \frac{\log \tilde{\delta}_{N}^{(1)}(x,k)}{N}\le \lim_{N\to +\infty} \frac{2\varepsilon(N+1) + \log(c_1(x,k)+\log N)}{N} = 2\varepsilon\, .
\]
Since $\varepsilon$ is arbitrary the result is proved.
\end{proof}

To finish the proof of Theorem \ref{mainB} is now enough to recall that $\tilde{\lambda}_{N}^{(d+1)}(x,k) = [\tilde{\delta}_{N}^{(1)}(x,k)]^{-1}$ and apply Lemma \ref{lem3}.
\qed

\section{An example}\label{sec:example}

In this section we present a class of maps for which the estimates on the eigenvalues of $\Gamma_{N,k}$ and $\tilde{\Gamma}_N$ in Theorems \ref{mainA} and \ref{mainB} can be made explicit. Inspired by some computations presented in \cite{bm} we consider the case of an affine hyperbolic diffeomorphism $\mathcal{C}_k: \TT^d\rightarrow \TT^d$ of the $d$-dimensional torus $\TT^d = \RR^d/\ZZ^d$. One example of this class for $d=2$ is the well-known Arnold's Cat Map.

Fixing a matrix $A\in SL(d,\ZZ)$ and a vector $b\in\RR^d$ we define 
\begin{equation}
  \mathcal{C}_k(x) = Ax +kb\, .
\end{equation}
Since $\det A =1$ the Lebesgue measure $m$ is $\mathcal{C}_k$-invariant. Finally we assume that $A$ has no eigenvalues of modulus 1, since as shown below this implies that $\mathcal{C}_k$ is hyperbolic. We denote by $\delta_1,\dots,\delta_d$ the eigenvalues  of $A$.

The orbits of the map $\mathcal{C}_k$ can be computed explicitly, more precisely, we have
\begin{lemma}\label{n_cat}
For every $n\in\mathbb{Z}$, setting $w=(\mathbbm{1}-A)^{-1}b$,
  \[
  \mathcal{C}^n_k(x) = A^n x
   + k(\mathbbm{1}-A^n)w, \quad \frac{\partial \mathcal{C}^n_k}{\partial k}(x) = (\mathbbm{1}-A^n)w 
  \]
  for all $x\in \TT^d$.
  \end{lemma}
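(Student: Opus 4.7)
The plan is to prove the closed-form expression for $\mathcal{C}^n_k(x)$ by a standard induction in $n$, treating the cases $n\geq 0$ and $n<0$ separately, and then to obtain the derivative formula by differentiating the resulting expression in $k$.

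First I would verify that the vector $w=(\mathbbm{1}-A)^{-1}b$ is well-defined. Since by assumption no eigenvalue of $A$ has modulus $1$, in particular $1$ is not an eigenvalue of $A$, so $\mathbbm{1}-A$ is invertible. The key identity to keep in mind is then $(\mathbbm{1}-A)w=b$, equivalently $b=w-Aw$.

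Next I would do induction on $n\geq 0$. The base case $n=0$ is immediate from $A^0=\mathbbm{1}$, which gives $\mathcal{C}^0_k(x)=x=A^0x+k(\mathbbm{1}-A^0)w$. Assuming the formula at step $n$, I compute
\[
\mathcal{C}^{n+1}_k(x)=A\bigl(A^n x+k(\mathbbm{1}-A^n)w\bigr)+kb=A^{n+1}x+k(A-A^{n+1})w+kb,
\]
and using $b=(\mathbbm{1}-A)w$ the bracket becomes $(A-A^{n+1})w+(\mathbbm{1}-A)w=(\mathbbm{1}-A^{n+1})w$, which is exactly the desired formula at step $n+1$. For $n<0$, the cleanest route is to derive the $n=-1$ case directly by inverting $\mathcal{C}_k$: solving $Ay+kb=x$ gives $\mathcal{C}^{-1}_k(x)=A^{-1}x-kA^{-1}b$, and one checks $-A^{-1}b=(\mathbbm{1}-A^{-1})w$ by multiplying both sides by $A$ and using $(\mathbbm{1}-A)w=b$. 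From here the same inductive step, run backwards, extends the formula to all negative $n$.

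Finally, since the expression for $\mathcal{C}^n_k(x)$ is affine in $k$, the derivative with respect to $k$ is obtained immediately as $(\mathbbm{1}-A^n)w$, completing the proof. I do not anticipate a real obstacle here: the only subtle point is making sure $w$ is well-defined, which is precisely where the hyperbolicity hypothesis on $A$ (no eigenvalues of modulus $1$, hence in particular no eigenvalue equal to $1$) enters.
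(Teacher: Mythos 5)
Your proof is correct and follows essentially the same route as the paper: the paper writes $\mathcal{C}^n_k(x)=A^nx+k\sum_{i=0}^{n-1}A^ib$ and sums the geometric series via $\sum_{i=0}^{n-1}A^i=(\mathbbm{1}-A^n)(\mathbbm{1}-A)^{-1}$, which is just the closed form of your induction. Your explicit remark that hyperbolicity guarantees $\mathbbm{1}-A$ is invertible (so that $w$ is well-defined) is a point the paper leaves implicit.
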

\begin{proof}
  The case $n=0$ is trivial. For $n\neq 0$ it follows directly noting that for $n>0$
\begin{align*}
  \mathcal{C}^n_k(x) &=    A^n x  
 +k\sum_{i=0}^{n-1} 
 A^ib =
 A^n x
   + k(\mathbbm{1}-A^n)(\mathbbm{1}-A)^{-1}b, 
\\
\mathcal{C}^{-n}_k(x,y) &=    A^{-n} x
 -k\sum_{i=1}^{n} 
 A^{-i}b =
 A^{-n} x  + k(\mathbbm{1}-A^{-n})(\mathbbm{1}-A)^{-1}b.
\end{align*}
\end{proof}

It is an easy consequence of this lemma that the matrices $F^n_k(x)$ and $\tilde{F}^n_k(x)$ introduced in \eqref{Fktilde} and \eqref{Fnktilde} are constant and independent on $x$ and $k$. For the same reason, the Lyapunov exponents of $\mathcal{C}_k$ are constant everywhere. 

We now give Theorem \ref{mainA} for maps $\mathcal{C}_k$ under the assumption that $A$ is symmetric. In this case the result is much sharper, giving the exact exponential rate of decrease for all the eigenvalues of the covariance matrix $\Gamma_{N,k}(x)$.

\begin{proposition}\label{caseAcat}
Let $\mathcal{C}_k: \TT^d\rightarrow \TT^d$ be defined as above with $A$ symmetric, and let $\gamma_1,\dots,\gamma_d$ its Lyapunov exponents counted with multiplicity, that is the exponents are not necessarily different. Then the eigenvalues $\lambda_{N,k}^{(i)}$ of the covariance matrix $\Gamma_{N,k}$ satisfy
\[
\lim_{N\to +\infty} \frac{\lambda_{N,k}^{(i)}}{e^{-2|\gamma_i|N}} = \left\{ \begin{array}{ll}
       1-|\delta_i|^{-2}\, , &\mbox{if }|\delta_i|>1, \\[0.3cm]
       1-|\delta_i|^{2}\, ,   &\mbox{if }|\delta_i|<1,
\end{array} \right. \qquad \mbox{for }i=1,\dots,d.
\]
\end{proposition}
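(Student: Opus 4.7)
The plan is to exploit the fact that for the affine map $\mathcal{C}_k$ with $A$ symmetric, the normal matrix diagonalises explicitly in the eigenbasis of $A$, so every eigenvalue of $C_{N,k}$ is a geometric sum that can be written in closed form. The reciprocals then give the eigenvalues of $\Gamma_{N,k}$, and the proposition follows by taking the limit $N\to+\infty$ of an elementary closed expression.

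First I would invoke Lemma \ref{n_cat} to note that $F^n_k(x)=A^n$, constant in $x$ and $k$, so that
\[
C_{N,k}(x)=\sum_{|n|\le N}(A^n)^T A^n .
\]
Since $A$ is symmetric, $(A^n)^T A^n=A^{2n}$, and $A$ admits an orthogonal diagonalisation $A=ODO^T$ with $D=\diag(\delta_1,\dots,\delta_d)$. Therefore
\[
C_{N,k}=O\,\diag\!\left(\sum_{|n|\le N}\delta_i^{2n}\right)\!O^T,
\]
and the eigenvalues of the covariance matrix $\Gamma_{N,k}=C_{N,k}^{-1}$ are exactly $\lambda_{N,k}^{(i)}=\big(\sum_{|n|\le N}\delta_i^{2n}\big)^{-1}$ with eigenvectors given by the columns of $O$. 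As a byproduct the Lyapunov exponents of $\mathcal{C}_k$ are $\gamma_i=\log|\delta_i|$ (with multiplicity equal to that of the corresponding eigenvalue of $A$); this also fixes the matching between the index $i$ of $\lambda_{N,k}^{(i)}$ and the Lyapunov exponent $\gamma_i$ stated in the proposition.

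Next I would compute, setting $r_i:=\delta_i^2>0$,
\[
\sum_{|n|\le N}\delta_i^{2n}=r_i^{-N}\,\frac{r_i^{2N+1}-1}{r_i-1}.
\]
For $|\delta_i|>1$ one has $r_i>1$, hence multiplying by $r_i^{-N}=|\delta_i|^{-2N}=e^{-2|\gamma_i|N}$ gives
\[
e^{-2|\gamma_i|N}\sum_{|n|\le N}\delta_i^{2n}=\frac{r_i^{2N+1}-1}{r_i^{2N}(r_i-1)}\;\xrightarrow[N\to\infty]{}\;\frac{r_i}{r_i-1}=\frac{1}{1-|\delta_i|^{-2}} .
\]
For $|\delta_i|<1$ one has $r_i<1$ and $e^{-2|\gamma_i|N}=r_i^{N}$, so the same identity yields
\[
e^{-2|\gamma_i|N}\sum_{|n|\le N}\delta_i^{2n}=\frac{r_i^{2N+1}-1}{r_i-1}\;\xrightarrow[N\to\infty]{}\;\frac{1}{1-r_i}=\frac{1}{1-|\delta_i|^{2}} .
\]
Inverting these two limits gives precisely the two cases in the statement.

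There is no real obstacle: the only point deserving attention is the consistency of the ordering of the $\gamma_i$ with the ordering of the $\lambda^{(i)}_{N,k}$, which is settled by the explicit eigenbasis $O$, and the remark that the assumption $|\delta_i|\neq 1$ ensures $\mathcal{C}_k$ is hyperbolic so that $\gamma_i\neq 0$ and every geometric series is non-degenerate.
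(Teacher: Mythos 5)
Your proposal is correct and follows essentially the same route as the paper: diagonalise the symmetric matrix $A$ orthogonally, observe that $C_{N,k}=\sum_{|n|\le N}(A^n)^TA^n$ diagonalises in the same basis with eigenvalues given by the geometric sums $\sum_{|n|\le N}\delta_i^{2n}$, and invert to get the stated limits. The only cosmetic difference is that you evaluate the geometric sum in closed form while the paper records the same asymptotics with a big-$O$ remainder.
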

\begin{proof}
Since $A$ is symmetric, there exists an orthonormal matrix $P$ such that
\begin{equation}\label{A_diag}
A= P^T\Lambda P, \qquad \Lambda =\diag (\delta_1,\dots,\delta_d), \qquad P^TP=\mathbbm{1},
\end{equation}
with $\delta_i \in \RR$ for all $i=1,\dots,d$, and in particular the Lyapunov exponents of $\mathcal{C}_k$ are given by $\gamma_i = \log |\delta_i|$. Since $A$ has no eigenvalues of modulus 1, the map $\mathcal{C}_k$ is hyperbolic (see Definition \ref{def:hyp-map}).

Now, from \eqref{A_diag}, the normal matrix satisfies
\[
C_{N,k} = \sum_{|n|\leq N}(A^n)^T A^n =P^T \sum_{|n|\leq N} \diag (\delta_1^{2n},\dots,\delta_d^{2n})P
\]
and its eigenvalues are
\[
  \delta_{N,k}^{(i)} =
    \left\{ \begin{array}{ll}
       \frac{\delta_i^{2N}}{1-\delta_i^{-2}} (1 +O(\delta_i^{-2N}))\, , &\mbox{if }|\delta_i|>1, \\[0.3cm]
      \frac{\delta_i^{-2N}}{1-\delta_i^{2}} (1 +O(\delta_i^{2N}))\, ,   &\mbox{if }|\delta_i|<1,
\end{array}\right.
\]
where we recall that the eigenvalues $\delta_i$ are real. We conclude using that  
\[
\Gamma_{N,k} = C^{-1}_{N,k} 
\]
and $\gamma_i = \log|\delta_i|$.
\end{proof}

We can say more also for the asymptotic behaviour of the largest eigenvalue $\tilde{\lambda}_N^{(d+1)}$ of the covariance matrix $\tilde{\Gamma}_N$ of the orbit determination problem in case (B). In Theorem \ref{mainB} we proved that $\tilde{\lambda}_N^{(d+1)}$ decreases slower than exponentially, the lack of a precise estimate being due to the uncertainty on the speed of convergence to zero in \eqref{conv-zero}. In the class of maps we are studying in this section, we can be much more precise on the asymptotic behaviour of $\tilde{\lambda}_N^{(d+1)}$.

\begin{proposition}\label{caseBcat}
Let $\mathcal{C}_k: \TT^d\rightarrow \TT^d$ be defined as above. The largest eigenvalue $\tilde{\lambda}_{N}^{(d+1)}$ of the covariance matrix $\tilde{\Gamma}_{N}$ for case (B) of the orbit determination problem satisfies
\[
\tilde{\lambda}_{N}^{(d+1)} \ge \frac{|w|^2+1}{|w|^2}\, \frac{1}{2N+1}
\]
for all $N\ge 1$.
\end{proposition}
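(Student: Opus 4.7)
The plan is to follow the same strategy as in Lemma~\ref{lem3}, but exploit the fact that for the affine map $\mathcal{C}_k$ the matrix $G(x,k)$ of the auxiliary diffeomorphism is constant and we can identify explicitly the unit vector associated with the vanishing Lyapunov exponent. Since $\tilde{\lambda}_N^{(d+1)} = [\tilde{\delta}_N^{(1)}]^{-1}$, proving the proposition is equivalent to showing the upper bound
\[
\tilde{\delta}_N^{(1)} \le (2N+1)\,\frac{|w|^2}{|w|^2+1}.
\]

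First I would read off from \eqref{Fktilde} and Lemma~\ref{n_cat} that, for $\mathcal{C}_k$, the Jacobian with respect to $(x,k)$ is independent of $(x,k)$ and equal to
\[
G = \begin{pmatrix} A & b \\ 0 & 1 \end{pmatrix}, \qquad G^n = \begin{pmatrix} A^n & (\mathbbm{1}-A^n)w \\ 0 & 1 \end{pmatrix}.
\]
A direct computation shows that the vector $(w,1)^T$ is fixed by $G$, hence by every $G^n$. This is exactly the direction of the vanishing Lyapunov exponent produced by Lemma~\ref{lem2}. Normalising, set
\[
v_0 := \frac{1}{\sqrt{|w|^2+1}}\,(w,1)^T \in \RR^{d+1},\qquad |v_0|=1,\qquad G^n v_0 = v_0 \ \text{for all } n\in\ZZ.
\]

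Next I would apply the variational characterisation of the smallest eigenvalue, together with Lemma~\ref{lem1}, which gives
\[
\tilde{\delta}_N^{(1)} \le v_0^T \tilde{C}_N v_0 = v_0^T C^g_N v_0 - (2N+1)\,v_0^T\begin{pmatrix} 0 & 0 \\ 0 & 1 \end{pmatrix} v_0.
\]
The first term is straightforward: since $G^n v_0 = v_0$,
\[
v_0^T C^g_N v_0 = \sum_{|n|\le N}|G^n v_0|^2 = 2N+1.
\]
The second term equals $(2N+1)/(|w|^2+1)$ from the explicit form of $v_0$, so
\[
\tilde{\delta}_N^{(1)} \le (2N+1)\left(1-\frac{1}{|w|^2+1}\right) = (2N+1)\,\frac{|w|^2}{|w|^2+1}.
\]

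Finally, inverting this inequality via $\tilde{\lambda}_N^{(d+1)} = [\tilde{\delta}_N^{(1)}]^{-1}$ yields exactly the claimed bound. There is no real obstacle here beyond correctly identifying the eigenvector of $G$ with eigenvalue $1$; the advantage over Lemma~\ref{lem3} is that for an affine map the convergence in \eqref{conv-zero} is trivial (the sequence $|G^n v_0|$ is constantly $1$), which is what allows the uniform bound in $N$ and an explicit constant in front of $(2N+1)^{-1}$.
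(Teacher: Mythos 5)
Your proof is correct and follows essentially the same route as the paper's: you identify the fixed vector $(w,1)^T$ of $G^n$, use the relation $\tilde{C}_N = C^g_N - \diag(0,\dots,0,2N+1)$ from Lemma \ref{lem1}, and apply the variational characterisation of the smallest eigenvalue. The only cosmetic difference is that you normalise $v_0$ at the outset, whereas the paper divides by $|v_0|^2$ at the end; the computation is identical.
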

\begin{proof}
Using the same notation of Section \ref{sec:proofB}, we consider the auxiliary map $g:\TT^d\times K \to \TT^d\times K$, $g(x,k)=(\mathcal{C}_k(x),k)$, and recalling \eqref{thlem1} and Lemma \ref{n_cat}, its Jacobian matrix $G^n$ takes for all $n\in \ZZ$ the form
\[
G^n(x,k) =   \left( \begin{array}{ccc|c}
    \multicolumn{3}{c|}{A^n} & (\mathbbm{1}-A^n)w \\
    \hline
    0 &\dots  & 0 & 1    
  \end{array}
  \right).
\]
for all $(x,k)$, where we recall that $w =(\mathbbm{1}-A)^{-1}b$ . We note that the eigenvalues of $G^n$ are equal to $\delta_1^{n},\dots,\delta_d^n,1$, where $\delta_1,\dots,\delta_d$ are the eigenvalues of $A$. Actually, choosing $v_i$ such that $Av_i = \delta_iv_i$, then $G^n(v_i,0)^T = (A^nv_i,0)^T=\delta_i^n(v_i,0)^T$, and, choosing the vector $v_0=(w,1)\in\RR^{d+1}$ we have
  \begin{align*}
G^nv_0 &= \left(
 \begin{array}{ccc|c}
     \multicolumn{3}{c|}{A^n} & (\mathbbm{1}-A^n)w \\
    \hline
    0 &\dots  & 0 & 1    
  \end{array}
 \right)
\left(
 \begin{array}{c}
    w \\
    1    
  \end{array}
  \right) \\
  &=
  \left(
 \begin{array}{c}
    A^nw+ (\mathbbm{1}-A^n)w \\
    1    
  \end{array}
  \right)
  =
  \left(
 \begin{array}{c}
    w \\
    1    
  \end{array}
  \right) =v_0.
  \end{align*}
We thus get that for the normal matrix $C^g_{N}(x,k)$ it holds
\[
v_0^T C^g_{N}(x,k) v_0 = \sum_{|n|\le N} |G^n v_0|^2 = (2N+1) |v_0|^2
\]
and for the normal matrix $\tilde{C}_{N}(x,k)$ of $\mathcal{C}_k$ relative to case (B) of the orbit determination problem, we use \eqref{thlem2} to write
\[
v_0^T \tilde{C}_{N}(x,k) v_0 = v_0^T C^g_{N}(x,k) v_0 - (2N+1) = (2N+1) (|v_0|^2-1) = (2N+1) |w|^2.
\]
Hence, from the variational characterisation of the eigenvalues of $\tilde{C}_{N}(x,k)$, we obtain that its smallest eigenvalue $\tilde{\delta}_N^{(1)}$ satisfies
\[
\tilde{\delta}_N^{(1)} = \min_{v\in\RR^{d+1}, |v|=1}v^T\tilde{C}_{N}(x,k)v \le \frac{1}{|v_0|^2}\, v_0^T \tilde{C}_{N}(x,k) v_0 = (2N+1) \frac{|w|^2}{|w|^2+1} \\
\]
and the result follows recalling that   
\[
\tilde{\lambda}_{N}^{(d+1)} = [\tilde{\delta}_{N}^{(1)}]^{-1}.
\]
\end{proof}

\section{Conclusions and future work}\label{sec:conclusions}
We have considered the problem of orbit determination under the assumption that the
number of observations grows simultaneously with the time span over
which they are performed. Following the numerical results in \cite{ssm,smil} we have studied the asymptotic rate of decay of the uncertainties as the number of observations grows.

We have considered the problem for hyperbolic maps, for which all the Lyapunov exponents are
not zero, depending on a parameter and we have treated separately the cases in which the
parameter is included or not in the orbit determination procedure. We have
analitically proved that if the parameter is not included then the
uncertainties decrease exponentially, while if the parameter is
included then the uncertainties decrease strictly slower than
exponentially.  This is consistent with the numerical results and
gives a proof of one of the main questions posed in \cite{ssm,smil}.

Together with the results in \cite{maro1}, which considered the ordered
case (KAM scenario), this paper is a step forward the complete
understanding of the numerical results.

\section*{Acknowledgements}
  This paper is dedicated to the memory of Andrea Milani who suggested the problem. We thank Gianluigi Del Magno for discussions on the proof of Lemma \ref{lem2}. We would like to thank also the unknown referees for several valuable advice that improved the final version of the paper.

%\bibliographystyle{plain}
%\bibliography{biblio4}
\end{document}